%
%
%
%
%
%
\RequirePackage{fix-cm}
\documentclass[smallextended]{svjour3}       
\smartqed  
\usepackage{graphicx}
%
%
%
%
 \journalname{}

\usepackage{graphicx}
\usepackage{amsmath}
\usepackage{amscd}
\usepackage{mathptmx}

\usepackage{mathrsfs}
\usepackage{amstext}
\usepackage{amssymb}
\usepackage{color}
\usepackage[symbol]{footmisc}

\begin{document}

\title{The Noether--Bessel-Hagen Symmetry Approach for Dynamical Systems }

\titlerunning{The Noether--Bessel-Hagen Symmetry Approach for Dynamical Systems}        

\author{Zbyn\v{e}k Urban\footnote{Corresponding author.} \and Francesco Bajardi \and Salvatore~Capozziello}

\authorrunning{Z. Urban, F. Bajardi, S. Capozziello} 

\institute{Z. Urban \at
              Department of Mathematics, Faculty of Civil Engineering\\
V\v{S}B-Technical University of Ostrava, Ludv\'{i}ka Pod\'{e}\v{s}t\v{e} 1875/17, 708 33 Ostrava-Poruba,
Czech Republic\\
\email{zbynek.urban@vsb.cz}\\ \\
F. Bajardi  \at
              Dipartimento di Fisica "Ettore Pancini", Universit\'a degli Studi di Napoli Federico II 
              Compl. Univ. Monte S. Angelo, 80126 Napoli, Italy\\
              INFN Sez. di Napoli, Compl. Univ. di Monte S. Angelo, Edificio G, Via	Cinthia, I-80126, Napoli, Italy \\
               \email{Bajardi@na.infn.it} \\ \\
        S. Capozziello \at
             Dipartimento di Fisica "Ettore Pancini", Universit\'a degli Studi di Napoli Federico II 
              Compl. Univ. Monte S. Angelo, 80126 Napoli, Italy\\
              INFN Sez. di Napoli, Compl. Univ. di Monte S. Angelo, Edificio G, Via	Cinthia, I-80126, Napoli, Italy \\
              Gran Sasso Science Institute, viale F. Crispi 7, I-67100, L'Aquila, Italy\\
              Tomsk State Pedagogical University, ul. Kievskaya, 60, 634061 Tomsk, Russia\\   
               \email{Capozziello@na.infn.it}        
           }
\date{Received: date / Accepted: date}

\maketitle

\begin{abstract}
The Noether-Bessel-Hagen theorem can be considered a natural extension of Noether Theorem to search for symmetries. Here, we develop the approach for dynamical systems introducing the basic foundations of the method. Specifically, we establish the Noether--Bessel-Hagen analysis of mechanical systems where external forces are present. In the second part of the paper, the approach is adopted to select symmetries for a given systems. In particular, we focus on the case of harmonic oscillator as a testbed for the theory, and on a cosmological system derived from scalar-tensor gravity with unknown scalar-field potential $V (\varphi)$. We show that the shape of potential is selected by the presence of symmetries. The approach results particularly useful as soon as the Lagrangian of a given system is not immediately identifiable or it is not a Lagrangian system.

\keywords{Lagrangian\and Noether symmetry approach\and
Noether--Bessel-Hagen symmetry\and invariant differential form\and
fibered mechanics\and  extended gravity cosmology}
 \subclass{70H33 \and 58E30 \and 58D19 \and 53A15}
\end{abstract}

\section{Introduction}

Noether symmetries have proved to be one of the most prolific mathematical tools for identifying conserved quantities and reducing dynamical systems. However, sometime it is difficult to identify the Lagrangian or the Hamiltonian related to a given system and then apply the so-called Noether Symmetry Approach
(cf. Capozziello \textit{et al.} \cite{Capozziello:1996bi,Dialektopoulos:2018qoe}).

Our aim, in this paper, is to analyze Noether-type symmetries of dynamical
systems with \emph{external forces}, which leave \emph{invariant}
the corresponding equations of motion, known as the \emph{Noether\textendash Bessel-Hagen
symmetries} (Bessel-Hagen \cite{Bessel-Hagen}); for symmetries in local variational theories see monographs by Kossmann-Schwarzbach \cite{Kossmann} and Krupka \cite{Krupka-Book}, and recent papers e.g. \cite{BK1,BK2,UB,Cattafi,PW,Francaviglia,Bashirov}. 
As we will show, this is a~straightforward generalization of the Noether Approach that can result extremely useful in several areas of physics like mechanics, field theory, cosmology and, in general, dynamical systems.

Let us recall
the well-known fact that any symmetry of a~Lagrangian $\lambda$
is also a~symmetry of the associated Euler\textendash Lagrange form
$E_{\lambda}$ or, in other words, any Noether symmetry (assumed by
the \emph{Noether's first theorem} \cite{Noether}) of a~Lagrangian $\lambda=\mathscr{L}dt$
is also a~Noether\textendash Bessel-Hagen symmetry of the mechanical
system derivable from $\lambda$. 

For \emph{variational} systems,
that is systems derivable as the Euler\textendash Lagrange equations
of a~Lagrangian, we have both the Noether currents for invariant
Lagrangians and an extension of Noether currents for invariant Euler\textendash Lagrange
forms; to this purpose, the infinitesimal first variation formula is
utilized. However, mechanical systems are very often \emph{not} variational,
typical examples of those include mechanical systems with frictional
forces. For non-variational mechanical systems, we may nevertheless introduce
the concept of invariance and study the Noether\textendash Bessel-Hagen
symmetries, although the standard Noether conserved currents are not
at disposal in this case. 

A~new idea of this work consists in solving the \emph{Noether\textendash Bessel-Hagen
equation} of Killing-type, 
\begin{equation}
\partial_{\xi}\varepsilon=0,\label{eq:NBH-intro}
\end{equation}
with respect to a  vector field $\xi$, an external force
$\phi$ and a~potential function $\mathscr{U}$, which are unknown. Here the mechanical
system is described by the source form $\varepsilon$ with components
\[
\varepsilon_{i}=E_{i}(\mathscr{L})-\phi_{i},
\]
given by the Euler\textendash Lagrange expressions $E_{i}(\mathscr{L})$
of a~mechanical Lagrange function $\mathscr{L}=\mathscr{T}-\mathscr{U}$
and force components $\phi_{i}$. In particular, this viewpoint enable
us to search for such mechanical system potentials arising from
symmetry requirements.

In Section 2, we summarize basic definitions and results on \emph{invariant}
differential forms, defined on jet prolongations of a~fibered manifold
over $1$-dimensional base (\emph{fibered mechanics}). As the subject
of this paper concerns mechanical systems, we may reasonably restrict
ourselves to \emph{second-order} equations of motion corresponding
to \emph{first-order} Lagrangians. For~general, higher-order \emph{field
theory} treatment on the subject and our main sources for the concept
of invariance, we refer to works by Trautman \cite{Trautman1,Trautman2} and Krupka \cite{Krupka-Book,DK-Invariant,Krupka-Invariance}, where all proofs can be found so we omit them in this section. For geometric invariance approach see also Sardanashvily
\cite{Sardanashvily}, and within classical Euclidean framework, Olver
\cite{Olver}, Bluman and Kumei \cite{Bluman} for symmetries of differential
equations, and Kossmann-Schwarzbach \cite{Kossmann} for Noether invariant
variational structures.

Section 3 contains results on symmetries of mechanical systems on with
external forces. The concept of a~force is introduced as a~$1$-form
on the tangent bundle $TM$ (cf. Krupka \cite{Krupka-Forces}). Our
main result, formulated in Theorem \ref{thm:Noether-BH}, describes
necessary and sufficient conditions for a~vector field $\xi$, external
force $1$-form $\phi$ and potential function $\mathscr{U}:M\rightarrow\mathbb{R}$
such that equation (\ref{eq:NBH-intro}) is satisfied identically.
These conditions include the Killing equation for $\xi$ with respect
to a~metric field $g$ defining the kinetic part of the Lagrangian,
and another condition on Lie derivaties of $\phi$ and $\mathscr{U}$;
note that the proof is based on the implication ``if metric $g$
is invariant, then the Levi-Civita connection $^{g}\Gamma$ associated
witth $g$ is also invariant''. Moreover, this theorem is further
developed for particular forms of external force $\phi$, including
conservative, dissipative, or variational forces. Closely related
works on the subject of symmetries and conservation laws for dissipative mechanical systems are Chien \emph{et. al.} \cite{Chien,Honein}, using a~different method based on variational multipliers
(``Neutral action method'').

In Section 4, the theory is applied  to  two examples: the $m$-dimensional
damped harmonic oscillator and a cosmological model derived from  scalar\textendash tensor theory of gravity. This latter example is particularly important for several reasons that will be discussed in Section \ref{scalar-tensor exp}. It is also worth noticing that the application of Noether theorem to different cosmological Lagrangians is an open issue aimed at selecting, by the existence of  symmetries, physically motivated models.  In fact, symmetries are able to reduce  dynamics and provide exact solutions of the field equations. For instance in \cite{Capozziello:2007wc,Paliathanasis:2014iva,Bajardi:2019zzs,Capozziello:2012iea}, the Noether Symmetry Approach is applied to some modified theories of gravity in a spherically symmetric background, while in \cite{Capozziello:2008ch,Capozziello:1996ay,Capozziello,Atazadeh:2011aa}, cosmological symmetries are studied. The approach applied in these papers (whose general features can be found in \cite{Capozziello:1996bi,Dialektopoulos:2018qoe}) needs a Lagrangian description to be performed; this is the main difference with respect to the approach  we are going to present in this paper, which can be directly applied to the field equations and does not require the system to be variational.

In this paper, the configuration space of a~mechanical system is
considered to be the second jet prolongation $J^{2}Y$ of a~product
fibered manifold $Y=\mathbb{R}\times M$, where $M$ is an open subset
of Euclidean space $\mathbb{R}^{m}$. This assumption allows us to
work with a~global chart. Jet spaces are then canonically identified
with products, namely $J^{1}Y\cong\mathbb{R}\times TM$, $J^{2}Y\cong\mathbb{R}\times T^{2}M$,
where $TM$ is the tangent space over $M$, endowed with global coordinates
$(x^{i},\dot{x}^{i})$, and $T^{2}M$ is the bundle of velocities
of order $2$ over $M$, endowed with global coordinates $(x^{i},\dot{x}^{i},\ddot{x}^{i})$.
Throughout, the Einstein summation convention is applied; $d\eta$,
$i_{\xi}\eta$, $\partial_{\xi}\eta$ denote the exterior derivative,
the contraction operation and the Lie derivative with respect to a~vector
field $\xi$ of a~differential form $\eta$. $\Gamma_{jk}^{i}$ are
the \emph{Christoffel symbols}, associated with a~metric tensor $g_{ij}$,
i.e.
\begin{equation}
\Gamma_{jk}^{i}=\frac{1}{2}g^{is}\left(\frac{\partial g_{sj}}{\partial x^{k}}+\frac{\partial g_{sk}}{\partial x^{j}}-\frac{\partial g_{jk}}{\partial x^{s}}\right).\label{eq:Christoffel}
\end{equation}
Along the paper, $\partial_{\xi}g$ and  $\partial_{\xi}\Gamma$ denote Lie derivatives
of metric field $g$ and of connection $\Gamma$ with respect to a~vector
field $\xi$,

\[
(\partial_{\xi}\Gamma)_{\eta}(\zeta)=(\partial_{\xi}\Gamma)_{jk}^{i}\zeta^{j}\eta^{k}\left(\frac{\partial}{\partial x^{i}}\right)_{x},
\]
where
\begin{equation}
(\partial_{\xi}\Gamma)_{jk}^{i}=\frac{\partial\Gamma_{jk}^{i}}{\partial x^{s}}\xi^{s}-\Gamma_{jk}^{s}\frac{\partial\xi^{i}}{\partial x^{s}}+\Gamma_{ks}^{i}\frac{\partial\xi^{s}}{\partial x^{j}}+\Gamma_{js}^{i}\frac{\partial\xi^{s}}{\partial x^{k}}+\frac{\partial^{2}\xi^{i}}{\partial x^{j}\partial x^{k}}.\label{eq:LieDerConnection}
\end{equation}

We use concepts and methods of global variational geometry, as described
in Section 2, and our results can be generalized to arbitrary smooth
fibered manifolds, including higher-order fibered mechanics and field
theory.

\section{Invariant source forms in fibered mechanics and the Noether\textendash Bessel-Hagen
equation}

Throughout, we consider a~fibered manifold $\pi:Y\rightarrow X$,
where $\dim X=1$, and its \emph{first} and \emph{second jet prolongations}
$J^{1}Y$ and $J^{2}Y$, respectively. The variational geometry structures,
well adapted to this work, can be found in \cite{Krupka-Book,KUV,UV,Volna}. For an open set $W\subset Y$, denote $W^{1}$
(respectively $W^{2}$) the preimage of $W$ in the canonical jet
projection $\pi^{1,0}:J^{1}Y\rightarrow Y$ (respectively $\pi^{2,0}:J^{2}Y\rightarrow Y$).
$\Omega^{1}W$ (respectively $\Omega^{2}W$) denotes the exterior
algebra of differential forms on $W^{1}$ (respectively $W^{2}$).
If $(V,\psi)$, $\psi=(t,x^{i})$, is a~fibered chart on $Y$, the
associated chart on $J^{1}Y$ (respectively $J^{2}Y$) reads $(V^{1},\psi^{1})$,
$\psi^{1}=(t,x^{i},\dot{x}^{i})$ (respectively $(V^{2},\psi^{2})$,
$\psi^{2}=(t,x^{i},\dot{x}^{i},\ddot{x}^{i})$). By means of chart,
we put $hdt=dt$, $hdx^{i}=\dot{x}^{i}dt$, $hd\dot{x}^{i}=\ddot{x}^{i}dt$,
and for any function $f:W^{1}\rightarrow\mathbb{R}$, $hf=f\circ\pi^{2,1}$,
where $\pi^{2,1}:J^{2}Y\rightarrow J^{1}Y$ is the canonical jet projection.
These formalae define a~\emph{global} homomorphism of exterior algebras
$h:\Omega^{1}W\rightarrow\Omega^{2}W$, called $\pi$\emph{-horizontalization}.
A~$1$-form $\rho\in\Omega_{1}^{1}W$ is called \emph{contact}, if
$h\rho=0$. In a~fibered chart, a~contact $1$-form $\rho$ is expressible
as a~linear combination $\rho=A_{i}\omega^{i}$ of contact $1$-forms
$\omega^{i}=dx^{i}-\dot{x}^{i}dt$. Any differential $1$-form $\rho\in\Omega_{1}^{1}W$
has a~unique decomposition $(\pi^{2,1})^{*}\rho=h\rho+p\rho$, where
$h\rho$, respectively $p\rho$, is the \emph{horizontal} \emph{component},
respectively \emph{contact component,} of $\rho$. Note that this
decomposition has a~generalization to arbitrary $k$-forms, see \cite{Krupka-Book}.
A vector field $\xi$ on $W\subset Y$ is said to be $\pi$\emph{-projectable},
if there exists a~vector field $\xi_{0}$ on $X$ such that $T\pi\circ\xi=\xi_{0}\circ\pi$.
In a~fibered chart, a~$\pi$-projectable vector field $\xi$ has
the expression $\xi=\xi^{0}(\partial/\partial t)+\xi^{i}(\partial/\partial x^{i})$,
where $\xi^{0}=\xi^{0}(t)$, $\xi^{i}=\xi^{i}(t,x^{j})$. For a~$\pi$-projectable
vector field $\xi$ , the vector field $J^{r}\xi$ on $W^{2}\subset J^{2}Y$
is defined by the formula
\[
J^{2}\xi(J^{2}\gamma)=\left.\frac{d}{dt}J^{2}\alpha_{t}^{\xi}(J^{2}\gamma)\right|_{t=0},
\]
where $\alpha_{t}^{\xi}$ is the local $1$-parameter group $\xi$,
and $J^{2}\alpha_{t}^{\xi}$ is the jet prolongation of automorphism
$\alpha_{t}^{\xi}$.

A~\emph{Lagrangian} of order $1$ for $Y$ is defined as a~$\pi^{1}$-horizontal
$1$-form $\lambda$ on $W^{1}\subset J^{1}Y$. In a~fibered chart
$(V,\psi)$, $\psi=(t,x^{i})$, on $W$, $\lambda$ has an expression
\begin{equation}
\lambda=\mathscr{L}dt,\label{eq:Lagrangian}
\end{equation}
where $\mathscr{L}:W^{1}\rightarrow\mathbb{R}$ is the (local) \emph{Lagrange
function}, associated with $\lambda$. The Euler\textendash Lagrange
mapping, well-known in the calculus of variations, assignes to a~Lagrangian
$\lambda$ the associated \emph{Euler\textendash Lagrange form} $E_{\lambda}$,
which is a~$1$-contact $2$-form on $W^{2}\subset J^{2}Y$, locally
expressed as
\begin{equation}
E_{\lambda}=E_{i}(\mathscr{L})\omega^{i}\wedge dt,\label{eq:EL-form}
\end{equation}
where its coefficients $E_{i}(\mathscr{L})$ are the \emph{Euler\textendash Lagrange
expressions},
\begin{equation}
E_{i}(\mathscr{L})=\frac{\partial\mathscr{L}}{\partial x^{i}}-\frac{d}{dt}\frac{\partial\mathscr{L}}{\partial\dot{x}^{i}}=\frac{\partial\mathscr{L}}{\partial x^{i}}-\frac{\partial^{2}\mathscr{L}}{\partial t\partial\dot{x}^{i}}-\frac{\partial^{2}\mathscr{L}}{\partial x^{j}\partial\dot{x}^{i}}\dot{x}^{j}-\frac{\partial^{2}\mathscr{L}}{\partial\dot{x}^{j}\partial\dot{x}^{i}}\ddot{x}^{j}.\label{eq:EL-expressions}
\end{equation}
The Euler\textendash Lagrange form (\ref{eq:EL-form}) is a~particular
example of a~\emph{source form} $\varepsilon$, which is by definition
a~$1$-contact, $\pi^{2,0}$-horizontal $2$-form on $W^{2}\subset J^{2}Y$.
We note that a~Lagrangian is a~representative of a~class of $1$-forms
and a~source form is a~representative of a~class of $2$-forms
in the variational sequence of quotient sheaves over $W\subset Y$;
see e.g. \cite{Volna}.

Recall now the well-known \emph{Cartan form} $\Theta_{\lambda}$ of
a~first-order Lagrangian $\lambda$. Suppose $\lambda\in\Omega_{1,X}^{1}W$
has a~chart expression (\ref{eq:Lagrangian}), then $\Theta_{\lambda}$
is locally expressed by
\begin{equation}
\Theta_{\lambda}=\mathscr{L}dt+\frac{\partial\mathscr{L}}{\partial\dot{x}^{i}}\omega^{i}.\label{eq:CartanForm}
\end{equation}
By means of chart transformations, one can easily observe that formula
(\ref{eq:CartanForm}) defines a~$1$-form $\Theta_{\lambda}\in\Omega_{1}^{1}W$,
which has the following properties:

(i) $h\Theta_{\lambda}=\lambda$, and

(ii) $i_{\xi}d\Theta_{\lambda}$ is contact $1$-form for every $\pi^{1,0}$-vertical
vector field $\xi$ on $W^{1}$.\\
We note that differential forms obeying properties (i) and (ii) are
called \emph{Lepage equivalents} (forms) of a~Lagrangian, see \cite{Krupka-Book,KKS}
and references therein. The meaning of condititon (i) is that variational
functionals defined by $\Theta_{\lambda}$ and $\lambda$ coincide.
Note also that for $\lambda\in\Omega_{1,X}^{1}W$, the Lepage equivalent
of $\lambda$ is by conditions (i) and (ii) determined in a~unique
way; this is \emph{no} longer true in higher-order field theory. Moreover,
the $1$-contact component of the exterior derivative of Cartan form
$\Theta_{\lambda}$ coincides with the Euler\textendash Lagrange form
$E_{\lambda}$,
\begin{equation}
p_{1}d\Theta_{\lambda}=E_{\lambda}.\label{eq:Theta-E}
\end{equation}

For a~$\pi$-projectable vector field $\xi$ on $W\subset Y$, and
for any section $\gamma$ of $\pi:Y\rightarrow X$ with values in
$W$, we have the \emph{infinitesimal first variation formula},
\begin{equation}
\begin{split}
J^{1}\gamma{}^{*}\partial_{J^{1}\xi}\lambda & =J^{2}\gamma{}^{*}i_{J^{2}\xi}E_{\lambda}+d\left(J^{1}\gamma{}^{*}i_{J^{1}\xi}\Theta_{\lambda}\right).\label{eq:FirstVar}
\end{split}
\end{equation}

A~diffeomorphism $\alpha:W\rightarrow Y$ is called an \emph{invariance
transformation} of $\lambda$, resp. $\varepsilon$, if 
\begin{equation}
(J^{r}\alpha){}^{*}\lambda=\lambda,\quad\textrm{resp.}\quad(J^{r}\alpha){}^{*}\varepsilon=\varepsilon,\label{eq:Invariance}
\end{equation}
where $J^{r}\alpha:W^{r}\rightarrow J^{r}Y$ is the $r$-jet prolongation
of $\alpha$. Note that this definition directly applies to vector
fields. A~$\pi$-projectable vector filed $\xi$ on $Y$ is called
a\emph{~generator of invariance transformations} of $\lambda$, resp.
$\varepsilon$, if its local one-parameter group $\alpha_{t}^{\xi}$
consists of invariance transformations of $\lambda$, resp. $\varepsilon$. 
\begin{lemma}
\label{lem:Invariance}Let $\lambda\in\Omega_{1,X}^{1}W$ be a~Lagrangian
of order $1$ for $Y$, and let $\varepsilon$ be a source form of
order $2$ for $Y$. A~$\pi$-projectable vector field $\xi$ on
$W\subset Y$ is a~generator of invariance transformations of $\lambda$,
respectively $\varepsilon$, if and only if the Lie derivative of
$\lambda$, respectively $\varepsilon$, with respect to $J^{r}\xi$
vanishes, i.e.
\begin{equation}
\partial_{J^{1}\xi}\lambda=0,\label{eq:NoetherEq}
\end{equation}
respectively
\begin{equation}
\partial_{J^{2}\xi}\varepsilon=0.\label{eq:NoetherBH-Eq}
\end{equation}

Generators of invariance transformations of $\lambda$, resp. $\varepsilon$,
form a~subalgebra of the algebra of vector fields on $W\subset Y$.
\end{lemma}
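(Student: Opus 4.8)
The plan is to prove both equivalences by the same mechanism, since the Lagrangian case and the source-form case differ only in the jet order $r$ and the degree of the form being acted upon. The two key facts I would invoke are first the definition of invariance transformation in equation (\ref{eq:Invariance}), which says $(J^r\alpha)^*\lambda = \lambda$ (resp. $(J^r\alpha)^*\varepsilon = \varepsilon$), and second the standard relationship between the prolongation of the one-parameter group and the prolongation of the generating vector field, namely that $J^r\alpha^\xi_t$ is the local one-parameter group of $J^r\xi$ on the jet space. This second fact is what lets me convert a statement about pullbacks under a flow into a statement about a Lie derivative.

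First I would fix a $\pi$-projectable vector field $\xi$ on $W$ with local one-parameter group $\alpha^\xi_t$, and recall the infinitesimal characterization of the Lie derivative: for any differential form $\eta$ on the appropriate jet prolongation,
\begin{equation}
\partial_{J^r\xi}\eta = \left.\frac{d}{dt}(J^r\alpha^\xi_t)^*\eta\right|_{t=0}.
\end{equation}
Applying this with $\eta = \lambda$ and $r=1$ (resp. $\eta = \varepsilon$ and $r=2$), I observe that if $\xi$ is a generator of invariance transformations then $(J^r\alpha^\xi_t)^*\lambda = \lambda$ holds identically in $t$ (resp. for $\varepsilon$), so differentiating at $t=0$ yields $\partial_{J^1\xi}\lambda = 0$ (resp. $\partial_{J^2\xi}\varepsilon = 0$). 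This proves the necessity direction.

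For the converse I would use the group property of the flow together with the naturality of the Lie derivative under pullback: if $\partial_{J^r\xi}\eta = 0$, then because $J^r\alpha^\xi_t$ is a one-parameter group one has $\frac{d}{dt}(J^r\alpha^\xi_t)^*\eta = (J^r\alpha^\xi_t)^*\partial_{J^r\xi}\eta = 0$ for all $t$ in the domain, whence $(J^r\alpha^\xi_t)^*\eta$ is constant in $t$ and equal to its value $\eta$ at $t=0$. Thus every $\alpha^\xi_t$ is an invariance transformation, so $\xi$ is a generator of invariance transformations; this establishes sufficiency for both $\lambda$ and $\varepsilon$ simultaneously.

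For the final assertion about the subalgebra structure, I would verify closure under the two Lie-algebra operations. Linearity over $\mathbb{R}$ is immediate from linearity of the Lie derivative in $\xi$ together with the fact that the prolongation $\xi \mapsto J^r\xi$ is $\mathbb{R}$-linear, so $\partial_{J^r(\xi_1+c\xi_2)}\eta = \partial_{J^r\xi_1}\eta + c\,\partial_{J^r\xi_2}\eta = 0$. For closure under the bracket, the crucial ingredient is that prolongation respects the bracket, $J^r[\xi_1,\xi_2] = [J^r\xi_1, J^r\xi_2]$, combined with the commutator identity $\partial_{[J^r\xi_1,J^r\xi_2]} = \partial_{J^r\xi_1}\partial_{J^r\xi_2} - \partial_{J^r\xi_2}\partial_{J^r\xi_1}$ for Lie derivatives; applying both vanishing hypotheses shows $\partial_{J^r[\xi_1,\xi_2]}\eta = 0$. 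The main obstacle, or at least the point requiring care rather than a genuine difficulty, is justifying $J^r[\xi_1,\xi_2] = [J^r\xi_1,J^r\xi_2]$ and the identification of $J^r\alpha^\xi_t$ as the flow of $J^r\xi$; these are the functoriality properties of jet prolongation, which I would cite from \cite{Krupka-Book} rather than reprove, since the excerpt explicitly defers all such proofs to that reference.
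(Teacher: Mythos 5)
Your proof is correct, and it is essentially the standard argument: the paper itself omits the proof of this lemma (Section 2 explicitly defers all proofs to Trautman \cite{Trautman1,Trautman2} and Krupka \cite{Krupka-Book}), and the argument found there is precisely yours — differentiate $(J^{r}\alpha_{t}^{\xi})^{*}\eta=\eta$ at $t=0$ for necessity, use $\frac{d}{dt}(J^{r}\alpha_{t}^{\xi})^{*}\eta=(J^{r}\alpha_{t}^{\xi})^{*}\partial_{J^{r}\xi}\eta$ and the group property for sufficiency, and combine $J^{r}[\xi_{1},\xi_{2}]=[J^{r}\xi_{1},J^{r}\xi_{2}]$ with the commutator identity for Lie derivatives to get the subalgebra claim. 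Note also that the identification of $J^{r}\alpha_{t}^{\xi}$ as the flow of $J^{r}\xi$, which you cite from \cite{Krupka-Book}, is in this paper nearly immediate from its own definition of $J^{2}\xi$ via $\frac{d}{dt}J^{2}\alpha_{t}^{\xi}(J^{2}\gamma)\big|_{t=0}$ together with functoriality of jet prolongation.
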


Equation (\ref{eq:NoetherEq}) is known as the \emph{Noether equation}
(cf. \cite{Trautman1}); (\ref{eq:NoetherBH-Eq}) is the geometric
formulation of the \emph{Noether\textendash Bessel-Hagen equation
}of the calculus of variations.

The classical (first) \emph{Noether's theorem,} which describes conservation
law equations for an extremal of an invariant Lagrangian,
is now a~straightforward consequence of formula (\ref{eq:FirstVar}).
\begin{theorem}
\label{thm:Noether}Let $\lambda\in\Omega_{1,X}^{1}W$ be a~Lagrangian
of order $1$ for $Y$, and let $\gamma$ be an extremal for $\lambda$.
Then for every generator $\xi$ of invariance transformations of $\lambda$,
\begin{equation}
d\left(J^{1}\gamma{}^{*}i_{J^{1}\xi}\Theta_{\lambda}\right)=0,\label{eq:NoetherConserved}
\end{equation}
where $\Theta_{\lambda}$ is the Cartan form (\ref{eq:CartanForm})
of $\lambda$.
\end{theorem}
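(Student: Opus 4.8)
The plan is to derive the conservation law (\ref{eq:NoetherConserved}) directly from the infinitesimal first variation formula (\ref{eq:FirstVar}), combined with the defining properties of an extremal and the invariance hypothesis furnished by Lemma \ref{lem:Invariance}. The formula (\ref{eq:FirstVar}) decomposes the pullback of the Lie derivative of $\lambda$ into two terms: a~bulk term $J^{2}\gamma{}^{*}i_{J^{2}\xi}E_{\lambda}$ governed by the Euler\textendash Lagrange form, and an exact (divergence) term $d(J^{1}\gamma{}^{*}i_{J^{1}\xi}\Theta_{\lambda})$. The strategy is to show that under the two hypotheses both the left-hand side and the bulk term vanish, forcing the exact term to vanish as well.

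First I would invoke the invariance hypothesis. Since $\xi$ is a~generator of invariance transformations of $\lambda$, Lemma \ref{lem:Invariance} gives the Noether equation $\partial_{J^{1}\xi}\lambda=0$ identically on $W^{1}$. Pulling this back along $J^{1}\gamma$ annihilates the left-hand side of (\ref{eq:FirstVar}), so that
\begin{equation}
0=J^{2}\gamma{}^{*}i_{J^{2}\xi}E_{\lambda}+d\left(J^{1}\gamma{}^{*}i_{J^{1}\xi}\Theta_{\lambda}\right).\label{eq:plan-step}
\end{equation}
Second I would use the extremal condition. By definition, a~section $\gamma$ is an extremal for $\lambda$ precisely when it satisfies the Euler\textendash Lagrange equations, i.e. $J^{2}\gamma{}^{*}i_{J^{2}\xi}E_{\lambda}=0$ for every $\pi$-projectable $\xi$; concretely, because $E_{\lambda}=E_{i}(\mathscr{L})\omega^{i}\wedge dt$ is $1$-contact, the contraction $i_{J^{2}\xi}E_{\lambda}$ produces a~term proportional to the components $E_{i}(\mathscr{L})$ evaluated along $J^{2}\gamma$, which vanish identically on an extremal. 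Substituting this into (\ref{eq:plan-step}) leaves exactly $d(J^{1}\gamma{}^{*}i_{J^{1}\xi}\Theta_{\lambda})=0$, which is the asserted conservation law (\ref{eq:NoetherConserved}).

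The main subtlety to verify carefully is the second step, namely that the pullback $J^{2}\gamma{}^{*}i_{J^{2}\xi}E_{\lambda}$ really does vanish on an extremal: one must check that contracting the contact factor $\omega^{i}$ against the prolonged field and then pulling back along $J^{2}\gamma$ kills the contact part, so the surviving expression is the product of $\xi^{i}$ with $E_{i}(\mathscr{L})\circ J^{2}\gamma$, and the latter vanishes by the Euler\textendash Lagrange equations regardless of the (arbitrary) vertical components of $\xi$. I expect this to be the only place requiring genuine attention; the remainder is a~direct substitution. No extra regularity or nondegeneracy assumptions are needed, since (\ref{eq:FirstVar}) holds for any $\pi$-projectable $\xi$ and any section, and the result follows as an immediate corollary of that identity together with Lemma \ref{lem:Invariance}.
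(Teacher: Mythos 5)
Your proof is correct and follows exactly the route the paper itself indicates: the theorem is stated there as ``a~straightforward consequence of formula (\ref{eq:FirstVar})'', which is precisely your argument of killing the left-hand side via Lemma \ref{lem:Invariance} and the bulk term $J^{2}\gamma{}^{*}i_{J^{2}\xi}E_{\lambda}$ via the extremal condition. Your careful verification that the pullback along $J^{2}\gamma$ annihilates the contact part (leaving only $E_{i}(\mathscr{L})$ contracted with the vertical components, which vanish on an extremal) is exactly the standard computation implicit in the paper's claim, so nothing is missing.
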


\begin{remark}
The infinitesimal first variation formula implies also another consequence
for invariant Lagrangians. If $\xi$ is a~generator of invariance
transformations of $\lambda$, and a~section $\gamma$ of $Y$ satisfies
the conservation law equation (\ref{eq:NoetherConserved}), then the
Euler\textendash Lagrange expressions of $\lambda$ are \emph{linearly
dependent} along $\gamma$.
\end{remark}

Now, let source form $\varepsilon$ be (locally) variational, that
is $\varepsilon$ coincides with the Euler\textendash Lagrange form
$E_{\lambda}$ for some Lagrangian $\lambda\in\Omega_{1,X}^{r}W$.
For any diffeomorphism $\alpha:W\rightarrow Y$, the Euler\textendash Lagrange
form $E_{\lambda}$ obeys the formula
\begin{equation}
J^{2r}\alpha{}^{*}E_{\lambda}=E_{J^{r}\alpha{}^{*}\lambda},\label{eq:ELformInv}
\end{equation}
and for any vector field $\xi$ on $M$,
\begin{equation}
\partial_{J^{2}\xi}E_{\lambda}=E_{\partial_{J^{1}\xi}\lambda}\label{eq:ELformLie}
\end{equation}
(see \cite{Krupka-Book}). The next lemma is an immediate consequence
of formula (\ref{eq:ELformInv}).
\begin{lemma}
\emph{(i)} Every invariance transformation of $\lambda$ is an invariance
transformation of $E_{\lambda}$.

\emph{(ii)} If $\alpha$ is an invariance transformation of $E_{\lambda}$,
then $\lambda-J^{r}\alpha{}^{*}\lambda$ is a~variationally trivial
Lagrangian.
\end{lemma}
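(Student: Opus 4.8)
The plan is to derive both claims directly from the naturality identity (\ref{eq:ELformInv}), $J^{2r}\alpha^{*}E_{\lambda}=E_{J^{r}\alpha^{*}\lambda}$, together with the $\mathbb{R}$-linearity of the Euler\textendash Lagrange mapping $\lambda\mapsto E_{\lambda}$, which is manifest from the explicit expressions (\ref{eq:EL-expressions}). No further analytic input is needed, so the argument is essentially book-keeping with these two facts, applied once in each direction.

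For part (i), I would start from the hypothesis that $\alpha$ is an invariance transformation of $\lambda$, i.e. $J^{r}\alpha^{*}\lambda=\lambda$ in the sense of (\ref{eq:Invariance}). Substituting this equality into the right-hand side of (\ref{eq:ELformInv}) gives $J^{2r}\alpha^{*}E_{\lambda}=E_{J^{r}\alpha^{*}\lambda}=E_{\lambda}$, which is precisely the defining condition (\ref{eq:Invariance}) for $\alpha$ to be an invariance transformation of the source form $E_{\lambda}$. This settles the first assertion.

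For part (ii), I would run the same identity in the reverse direction. Assuming now that $\alpha$ is an invariance transformation of $E_{\lambda}$, i.e. $J^{2r}\alpha^{*}E_{\lambda}=E_{\lambda}$, the left-hand side of (\ref{eq:ELformInv}) equals $E_{J^{r}\alpha^{*}\lambda}$, whence $E_{J^{r}\alpha^{*}\lambda}=E_{\lambda}$. Invoking linearity of the Euler\textendash Lagrange mapping, I would rewrite this as $E_{\lambda-J^{r}\alpha^{*}\lambda}=E_{\lambda}-E_{J^{r}\alpha^{*}\lambda}=0$. Since a Lagrangian is by definition variationally trivial exactly when its associated Euler\textendash Lagrange form vanishes identically, the Lagrangian $\lambda-J^{r}\alpha^{*}\lambda$ is variationally trivial, as claimed.

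The argument has no genuine obstacle; its only point requiring care is the book-keeping of jet orders, namely that (\ref{eq:ELformInv}) relates pullback at order $r$ on Lagrangians to pullback at order $2r$ on Euler\textendash Lagrange forms, and that $J^{r}\alpha^{*}\lambda$ is again a Lagrangian of the same order on the relevant domain, so that the Euler\textendash Lagrange operator may legitimately be applied to the difference $\lambda-J^{r}\alpha^{*}\lambda$. Once this compatibility is noted, both statements follow immediately from (\ref{eq:ELformInv}) and linearity.
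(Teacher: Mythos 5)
Your proposal is correct and takes essentially the same route as the paper, which presents the lemma as an immediate consequence of the naturality identity (\ref{eq:ELformInv}); your two substitutions, together with the $\mathbb{R}$-linearity of the Euler--Lagrange mapping giving $E_{\lambda-J^{r}\alpha^{*}\lambda}=0$ in part (ii), are exactly the intended argument.
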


Invariant transformations of the Euler\textendash Lagrange form $E_{\lambda}$
extends the standard Theorem \ref{thm:Noether} of E. Noether. 
\begin{theorem}
\label{thm:NoetherExtended}Let $\lambda\in\Omega_{1,X}^{1}W$ be
a~Lagrangian of order $1$ for $Y$, let $\gamma$ be an extremal
for $\lambda$, and let $\xi$ be a~generator of invariance transformations
of the Euler\textendash Lagrange form $E_{\lambda}$. Then there exists
a~function $f$ on $W\subset Y$ such that
\begin{equation}
d\left(J^{1}\gamma{}^{*}(i_{J^{1}\xi}\Theta_{\lambda}+f)\right)=0.\label{eq:Noether-EL}
\end{equation}
\end{theorem}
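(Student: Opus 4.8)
The plan is to read off the conclusion directly from the infinitesimal first variation formula (\ref{eq:FirstVar}), once we recognize that $\partial_{J^{1}\xi}\lambda$ is a \emph{variationally trivial} Lagrangian.

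\emph{First step.} Since $\xi$ generates invariance transformations of $E_{\lambda}$, Lemma~\ref{lem:Invariance} applied to the source form $\varepsilon=E_{\lambda}$ gives the Noether--Bessel-Hagen equation $\partial_{J^{2}\xi}E_{\lambda}=0$. Combining this with the commutation rule (\ref{eq:ELformLie}) between the Euler--Lagrange mapping and the Lie derivative yields $E_{\partial_{J^{1}\xi}\lambda}=\partial_{J^{2}\xi}E_{\lambda}=0$. Hence the first-order Lagrangian $\mu:=\partial_{J^{1}\xi}\lambda$ (again horizontal, since the Lie derivative of a horizontal form along a prolonged projectable vector field is horizontal) has identically vanishing Euler--Lagrange form. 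The same conclusion follows from part (ii) of the Lemma preceding this theorem: each $\alpha_{s}^{\xi}$ in the one-parameter group of $\xi$ is an invariance transformation of $E_{\lambda}$, so $\lambda-(J^{1}\alpha_{s}^{\xi})^{*}\lambda$ is variationally trivial, and differentiating at $s=0$ transfers the triviality to $\partial_{J^{1}\xi}\lambda$.

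\emph{Second step.} I would then invoke the local characterization of variationally trivial Lagrangians over a one-dimensional base (the description of the kernel of the Euler--Lagrange map in the variational sequence, see \cite{Krupka-Book,Volna}): a first-order Lagrangian $\mu$ with $E_{\mu}=0$ is locally the horizontalization of an exact one-form, $\mu=h\,df$, for some function $f$ on $W\subset Y$. Pulling back along the prolonged extremal and using that horizontalization followed by $J^{1}\gamma^{*}$ reproduces the ordinary differential, we obtain
\[
J^{1}\gamma^{*}\partial_{J^{1}\xi}\lambda=J^{1}\gamma^{*}h\,df=d\bigl(f\circ\gamma\bigr).
\]

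\emph{Third step.} Because $\gamma$ is an extremal, the Euler--Lagrange expressions $E_{i}(\mathscr{L})$ vanish along $J^{2}\gamma$; together with $J^{2}\gamma^{*}\omega^{i}=0$ this forces $J^{2}\gamma^{*}i_{J^{2}\xi}E_{\lambda}=0$. Substituting the last two identities into (\ref{eq:FirstVar}) leaves $d(f\circ\gamma)=d\bigl(J^{1}\gamma^{*}i_{J^{1}\xi}\Theta_{\lambda}\bigr)$, i.e. $d\bigl(J^{1}\gamma^{*}(i_{J^{1}\xi}\Theta_{\lambda}-f)\bigr)=0$, and replacing $f$ by $-f$ gives precisely (\ref{eq:Noether-EL}).

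The delicate point is the second step. Variational triviality guarantees the potential $f$ only locally, so producing an $f$ defined on all of $W$ (as the statement asserts) requires either shrinking $W$ or a cohomological triviality assumption on its domain; one must also verify that the characterization applies to $\mu$ in the correct (first) order. Everything else is routine bookkeeping with the first variation formula and with the vanishing of contact forms on prolonged sections.
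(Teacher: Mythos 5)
Your proof is correct and follows essentially the paper's own route: the paper omits a general proof of Theorem \ref{thm:NoetherExtended} (deferring to \cite{Trautman1,Krupka-Book}), but your mechanism --- $\partial_{J^{2}\xi}E_{\lambda}=0$ combined with (\ref{eq:ELformLie}) makes $\partial_{J^{1}\xi}\lambda$ variationally trivial, hence locally of the form $h\,df$, which is then substituted into the first variation formula (\ref{eq:FirstVar}) along the extremal --- is precisely what the paper carries out in the mechanical special case in the proof of Corollary \ref{cor:Noether-Add}(ii). Your closing caveat about the merely local existence of $f$ is likewise consistent with the paper's Remark after the theorem, which reserves the global form of (\ref{eq:Noether-EL}) for first-order Lagrangians.
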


\begin{remark}
Note that Theorem \ref{thm:NoetherExtended} contains conservation
law (\ref{eq:Noether-EL}) in a~\emph{global} form for the order
of Lagrangian equal $1$ only. For second and higher-order Lagrangians,
(\ref{eq:Noether-EL}) gives \emph{local} conservation laws.
\end{remark}

\section{The Noether\textendash Bessel-Hagen symmetries and external forces}

In this section, we study symmetries in sense of the Noether\textendash Bessel-Hagen
equation (\ref{eq:NoetherBH-Eq}) for source forms, representing mechanical
systems with external forces. 

Consider a~\emph{Lagrangian} $\lambda=\mathscr{L}dt$ for a~mechanical
system, where $\mathscr{L}:TM\rightarrow\mathbb{R}$ is a~first-order
\emph{Lagrange function} of the form kinetic \emph{minus} potential
energy,
\begin{equation}
\mathscr{L}=\mathscr{T}-\mathscr{U}.\label{eq:Lagrangian-T-U}
\end{equation}
The kinetic energy $\mathscr{T}$ is a~real-valued function defined
on $TM$ by the formula
\begin{equation}
\mathscr{T}=\frac{1}{2}g_{ij}\dot{x}^{i}\dot{x}^{j},\label{eq:Kinetic}
\end{equation}
where $g_{ij}$ is a~metric tensor on $M$. The potential energy
$\mathscr{U}$ is a~real-valued function defined on the configuration
space $M$, i.e. $\mathscr{U}=\mathscr{U}(x^{i})$, and describes
properties of the mechanical system. The \emph{Euler\textendash Lagrange
expressions} associated with $\mathscr{L}$ are functions on $T^{2}M$,
defined as
\begin{equation}
\begin{split}
E_{i}(\mathscr{L}) & =-\frac{\partial\mathscr{L}}{\partial x^{i}}+\frac{d}{dt}\frac{\partial\mathscr{L}}{\partial\dot{x}^{i}}.\label{eq:ELexp}
\end{split}
\end{equation}
Substituting the form (\ref{eq:Lagrangian-T-U}) of $\mathscr{L}$
into (\ref{eq:ELexp}), we get
\begin{equation}
\begin{split}
E_{i}(\mathscr{L}) & =E_{i}(\mathscr{T})+\frac{\partial\mathscr{U}}{\partial x^{i}}=-\frac{\partial\mathscr{T}}{\partial x^{i}}+\frac{d}{dt}\frac{\partial\mathscr{T}}{\partial\dot{x}^{i}}+\frac{\partial\mathscr{U}}{\partial x^{i}}\nonumber \\
 & =\frac{1}{2}\left(\frac{\partial g_{ij}}{\partial x^{k}}+\frac{\partial g_{ik}}{\partial x^{j}}-\frac{\partial g_{jk}}{\partial x^{i}}\right)\dot{x}^{j}\dot{x}^{k}+g_{ij}\ddot{x}^{j}+\frac{\partial\mathscr{U}}{\partial x^{i}}\label{eq:ELexp2}\\
 & =g_{ij}\left(\ddot{x}^{j}+\Gamma_{ks}^{j}\dot{x}^{k}\dot{x}^{s}\right)+\frac{\partial\mathscr{U}}{\partial x^{i}},\nonumber 
\end{split}
\end{equation}
where $\Gamma_{ks}^{j}$ are the \emph{Christoffel symbols} (\ref{eq:Christoffel})
associated to $g_{ij}$. Note that the Lagrange function $\mathscr{L}$
(\ref{eq:Lagrangian-T-U}) is regular since $g_{ij}$ is a~non-singular
matrix of the form
\[
g_{ij}=\frac{\partial^{2}\mathscr{L}}{\partial\dot{x}^{i}\partial\dot{x}^{j}}.
\]
Equations of motion of the mechanical system are the \emph{Euler\textendash Lagrange
equations associated with }$\mathscr{L}$\emph{, }
\begin{equation}
\begin{split}
 & E_{i}(\mathscr{L})=0,\quad i=1,\ldots,m.\label{eq:ELeq}
\end{split}
\end{equation}
The energy of the system equals
\begin{equation*}
\begin{split}
E=\frac{\partial\mathscr{L}}{\partial\dot{x}^{i}}\dot{x}^{i}-\mathscr{L} & =\frac{1}{2}g_{ij}\dot{x}^{i}\dot{x}^{j}+\mathscr{U},
\end{split}
\end{equation*}
hence the conservation law of energy along extremals reads
\begin{equation}
\frac{1}{2}g_{ij}\dot{x}^{i}\dot{x}^{j}+\mathscr{U}=0.\label{eq:EnergyCond}
\end{equation}

By an \emph{external force} for the mechanical system we call any~$\pi^{1,0}$-horizontal
$1$-form $\phi$ on $TM$, locally expressed as 
\begin{equation}
\phi=\phi_{i}dx^{i},\label{eq:Force}
\end{equation}
with components $\phi_{i}=\phi_{i}(x^{j},\dot{x}^{j})$ smooth real-valued
functitons on $TM$.

Consider equations of motion (\ref{eq:ELeq}) under the  influence of external
force $\phi=(\phi_{i})$,
\begin{equation}
\begin{split}
 & E_{i}(\mathscr{L})=\phi_{i},\quad i=1,\ldots,m.\label{eq:ELeq-F}
\end{split}
\end{equation}
 System (\ref{eq:ELeq-F}) associates a dynamical form $\varepsilon$
on $\mathbb{R}\times T^{2}M$,
\begin{equation}
\varepsilon=\varepsilon_{i}\omega^{i}\wedge dt,\label{eq:Source-F}
\end{equation}
where
\begin{equation}
\begin{split}
\varepsilon_{i} & =E_{i}(\mathscr{L})-\phi_{i},\label{eq:SourceCoef}
\end{split}
\end{equation}
and $\omega^{i}=dx^{i}-\dot{x}^{i}dt$ are contact $1$-forms on $\mathbb{R}\times TM$.
We call $\varepsilon$ (\ref{eq:Source-F}) the \emph{source form}
\emph{associated} with Lagrangian $\lambda=\mathscr{L}dt$ (\ref{eq:Lagrangian-T-U})
and force $\phi$ (\ref{eq:Force}).

Let $\xi$ be a vector field on $M$, locally expressed by
\begin{equation}
\xi=\xi^{i}\frac{\partial}{\partial x^{i}}.\label{eq:VectorField}
\end{equation}
Clearly, $\xi$ is $\pi$-projectable, and let $J^{2}\xi$ be its
second jet prolongation on $T^{2}M$,
\begin{equation}
J^{2}\xi=\xi^{i}\frac{\partial}{\partial x^{i}}+\dot{\xi}^{i}\frac{\partial}{\partial\dot{x}^{i}}+\ddot{\xi}^{i}\frac{\partial}{\partial\ddot{x}^{i}},\label{eq:2-Pro}
\end{equation}
where
\begin{equation}
\begin{split}
 & \dot{\xi}^{i}=\frac{d\xi^{i}}{dt}=\frac{\partial\xi^{i}}{\partial x^{j}}\dot{x}^{j},\quad\ddot{\xi}^{i}=\frac{d^{2}\xi^{i}}{dt^{2}}=\frac{\partial^{2}\xi^{i}}{\partial x^{j}\partial x^{k}}\dot{x}^{j}\dot{x}^{k}+\frac{\partial\xi^{i}}{\partial x^{j}}\ddot{x}^{j}.\label{eq:2-ProV}
\end{split}
\end{equation}

The Noether symmetries of $\lambda$ (\ref{eq:Lagrangian-T-U}) are
given by the following lemma.
\begin{lemma}
\emph{\label{lem:NoetherSym}}The following two conditions are equivalent:

\emph{(a)} A vector field $\xi$ on $M$ is a~generator of invariance
transformations of $\lambda$ (\ref{eq:Lagrangian-T-U}).

\emph{(b)} $\xi$ and $\mathscr{U}$ satisfy the Killing equations
\begin{equation}
\partial_{\xi}g=0,\quad\partial_{\xi}\mathscr{U}=0.\label{eq:NoetherSymCond}
\end{equation}
\end{lemma}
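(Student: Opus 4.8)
The plan is to reduce condition (a) to the Noether equation by means of Lemma \ref{lem:Invariance}, and then to evaluate the resulting Lie derivative explicitly, exploiting the homogeneity of $\mathscr{L}$ in the velocities. By Lemma \ref{lem:Invariance}, the vector field $\xi$ is a generator of invariance transformations of $\lambda$ if and only if the Noether equation (\ref{eq:NoetherEq}) holds, i.e. $\partial_{J^{1}\xi}\lambda=0$. Since $\xi=\xi^{i}\partial/\partial x^{i}$ has no $\partial/\partial t$ component and $\xi^{i}=\xi^{i}(x^{j})$, its prolongation is $J^{1}\xi=\xi^{i}\partial/\partial x^{i}+\dot{\xi}^{i}\partial/\partial\dot{x}^{i}$ with $\dot{\xi}^{i}=(\partial\xi^{i}/\partial x^{j})\dot{x}^{j}$, and $i_{J^{1}\xi}dt=0$. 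Hence the Lie derivative reduces to
\begin{equation*}
\partial_{J^{1}\xi}\lambda=(J^{1}\xi\cdot\mathscr{L})\,dt=\left(\xi^{i}\frac{\partial\mathscr{L}}{\partial x^{i}}+\dot{\xi}^{i}\frac{\partial\mathscr{L}}{\partial\dot{x}^{i}}\right)dt,
\end{equation*}
so that (a) is equivalent to the identical vanishing on $\mathbb{R}\times TM$ of the coefficient function in parentheses.

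Next I would substitute $\mathscr{L}=\tfrac{1}{2}g_{ij}\dot{x}^{i}\dot{x}^{j}-\mathscr{U}$ from (\ref{eq:Lagrangian-T-U})--(\ref{eq:Kinetic}) and sort the result by its degree of homogeneity in the velocities. The contribution $-\xi^{i}\,\partial\mathscr{U}/\partial x^{i}=-\partial_{\xi}\mathscr{U}$ is independent of $\dot{x}$, while the remaining two pieces, namely $\tfrac{1}{2}\xi^{i}(\partial g_{jk}/\partial x^{i})\dot{x}^{j}\dot{x}^{k}$ arising from $\partial\mathscr{L}/\partial x^{i}$ and $g_{ik}(\partial\xi^{i}/\partial x^{j})\dot{x}^{j}\dot{x}^{k}$ arising from $\dot{\xi}^{i}\,\partial\mathscr{L}/\partial\dot{x}^{i}$, are homogeneous of degree two. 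Symmetrizing the latter over the velocity indices $j,k$, the quadratic part assembles precisely into
\begin{equation*}
\frac{1}{2}\left(\xi^{i}\frac{\partial g_{jk}}{\partial x^{i}}+g_{ik}\frac{\partial\xi^{i}}{\partial x^{j}}+g_{ij}\frac{\partial\xi^{i}}{\partial x^{k}}\right)\dot{x}^{j}\dot{x}^{k}=\frac{1}{2}(\partial_{\xi}g)_{jk}\,\dot{x}^{j}\dot{x}^{k},
\end{equation*}
where the bracket is recognized as the Lie derivative of the metric. This identification is the one mildly delicate bookkeeping step, as one must symmetrize correctly and match the standard expression for $\partial_{\xi}g$.

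Finally I would invoke the elementary fact that a function on $TM$ which splits as a sum of a term homogeneous of degree two in the fibre coordinates $\dot{x}^{i}$ and a term of degree zero vanishes identically if and only if each homogeneous component vanishes separately. Applied here, $\partial_{J^{1}\xi}\lambda=0$ forces simultaneously $\partial_{\xi}\mathscr{U}=0$ (by setting $\dot{x}=0$) and $(\partial_{\xi}g)_{jk}\dot{x}^{j}\dot{x}^{k}=0$ for all $\dot{x}$; since $(\partial_{\xi}g)_{jk}$ is symmetric, the vanishing of the associated quadratic form is equivalent to $\partial_{\xi}g=0$ by polarization. This produces the Killing equations (\ref{eq:NoetherSymCond}) and establishes the equivalence (a) $\Leftrightarrow$ (b). I anticipate no genuine obstacle beyond the index manipulation in the middle step; the conceptual content is entirely the separation of the Noether condition by velocity-homogeneity degree.
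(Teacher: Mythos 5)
Your proposal is correct and follows essentially the same route as the paper: reduce (a) to the Noether equation $\partial_{J^{1}\xi}\lambda=0$ via Lemma \ref{lem:Invariance}, compute the Lie derivative in the chart, and recognize the quadratic term as $\tfrac{1}{2}(\partial_{\xi}g)_{jk}\dot{x}^{j}\dot{x}^{k}$ and the velocity-free term as $-\partial_{\xi}\mathscr{U}$. The only difference is presentational: you make explicit the separation by homogeneity degree in $\dot{x}$ (and the polarization step for the symmetric quadratic form), which the paper's terser ``as required'' leaves implicit.
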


\begin{proof}
Equivalence of (a) and (b) is due to invariance Lemma \ref{lem:Invariance}.
Indeed, by a~straightforward calculation we get
\begin{equation*}
\begin{split}
\partial_{J^{1}\xi}\lambda & =\left(\partial_{J^{1}\xi}\mathscr{L}\right)dt=\left(\frac{\partial\mathscr{L}}{\partial x^{i}}\xi^{i}+\frac{\partial\mathscr{L}}{\partial\dot{x}^{i}}\dot{\xi}^{i}\right)dt\\
 & =\left(\frac{1}{2}\left(\frac{\partial g_{jk}}{\partial x^{i}}\xi^{i}+g_{ij}\frac{\partial\xi^{i}}{\partial x^{k}}+g_{ik}\frac{\partial\xi^{i}}{\partial x^{j}}\right)\dot{x}^{j}\dot{x}^{k}-\frac{\partial\mathscr{U}}{\partial x^{i}}\xi^{i}\right)dt\\
 & =\left(\frac{1}{2}(\partial_{\xi}g)\dot{x}^{j}\dot{x}^{k}-\partial_{\xi}\mathscr{U}\right)dt,
\end{split}
\end{equation*}
as required.
\end{proof}
Now, we formulate our main theorem, describing symmetries of source
form $\varepsilon$ (\ref{eq:Source-F}) associated with $\lambda$
and $\phi$. More precisely, our aim is to find conditions on a~vector
field $\xi$ on $M$, a~potential energy function $\mathscr{U}=\mathscr{U}(x^{i})$,
and force $\phi=(\phi_{i})$ such that the \emph{Noether\textendash Bessel-Hagen
equation},
\begin{equation}
\partial_{J^{2}\xi}\varepsilon=0,\label{eq:NBH-F}
\end{equation}
holds (see Lemma \ref{lem:Invariance}, (\ref{eq:NoetherBH-Eq})).
\begin{theorem}
\label{thm:Noether-BH}Let $\varepsilon$ be a source form on $\mathbb{R}\times T^{2}M$,
associated with $\lambda=\mathscr{L}dt$ (\ref{eq:Lagrangian-T-U})
and $\phi=(\phi_{i})$ (\ref{eq:Force}). The following two conditions
are equivalent:

\emph{(a)} A vector field $\xi$ on $M$ is a~generator of invariance
transformations of $\varepsilon$, i.e. the Noether\textendash Bessel-Hagen
equation (\ref{eq:NBH-F}) is satisfied identically.

\emph{(b)} $\xi$, $\mathscr{U}$, and $(\phi_{i})$ satisfy the following
the conditions
\begin{equation}
\partial_{\xi}g=0,\label{eq:NBH-F1}
\end{equation}
and, for every $i$,
\begin{equation}
\begin{split}
 & \frac{\partial}{\partial x^{i}}(\partial_{\xi}\mathscr{U})+(\partial_{J^{1}\xi}\phi)_{i}=0,\label{eq:NBH-F2}
\end{split}
\end{equation}
where
\begin{equation}
(\partial_{J^{1}\xi}\phi)_{i}=\frac{\partial\xi^{j}}{\partial x^{i}}\phi_{j}+\frac{\partial\phi_{i}}{\partial x^{j}}\xi^{j}+\frac{\partial\phi_{i}}{\partial\dot{x}^{j}}\frac{\partial\xi^{j}}{\partial x^{k}}\dot{x}^{k}\label{eq:ForceLieComp}
\end{equation}
represent the Lie derivative $\partial_{J^{1}\xi}\phi$ of external
force $1$-form $\phi=(\phi_{i})$ w.r.t. $J^{1}\xi$, and $\partial_{\xi}\mathscr{U}$
is the Lie derivative of function $\mathscr{U}$ w.r.t. $\xi$.
\end{theorem}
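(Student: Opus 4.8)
The plan is to reduce the tensorial equation (\ref{eq:NBH-F}) to a single scalar identity on the components $\varepsilon_i$ and then to split that identity according to its polynomial dependence on the independent jet coordinates $\ddot{x}^j,\dot{x}^j$. First I would note that $\xi$ is projectable over $X=\mathbb{R}$ with vanishing base component, so $\partial_{J^2\xi}dt=0$, while a direct computation from $\omega^i=dx^i-\dot{x}^idt$ and (\ref{eq:2-ProV}) gives $\partial_{J^2\xi}\omega^i=(\partial\xi^i/\partial x^j)\,\omega^j$. Applying the Leibniz rule to $\varepsilon=\varepsilon_i\,\omega^i\wedge dt$ in (\ref{eq:Source-F}) then collapses (\ref{eq:NBH-F}) to the requirement that, for every $i$,
\[
\partial_{J^2\xi}\varepsilon_i+\varepsilon_j\frac{\partial\xi^j}{\partial x^i}=0
\]
holds identically on $\mathbb{R}\times T^2M$.

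The computational heart is the transformation law of the ``geodesic'' field $a^j:=\ddot{x}^j+\Gamma^j_{ks}\dot{x}^k\dot{x}^s$, in terms of which $\varepsilon_i=g_{ij}a^j+\partial\mathscr{U}/\partial x^i-\phi_i$ by (\ref{eq:ELexp2}) and (\ref{eq:SourceCoef}). I would apply $J^2\xi$ from (\ref{eq:2-Pro})--(\ref{eq:2-ProV}) directly to $a^j$ and reorganise, using $\ddot{x}^k=a^k-\Gamma^k_{ls}\dot{x}^l\dot{x}^s$ to eliminate the second derivatives. The resulting terms reassemble exactly into the Lie derivative of the connection (\ref{eq:LieDerConnection}), yielding the clean identity
\[
\partial_{J^2\xi}a^j=(\partial_\xi\Gamma)^j_{ks}\dot{x}^k\dot{x}^s+\frac{\partial\xi^j}{\partial x^k}a^k .
\]
This is the step I expect to be the main obstacle: verifying that the inhomogeneous pieces produced by the prolongation coincide term-by-term with the summands of (\ref{eq:LieDerConnection}) is the one genuinely nontrivial calculation, and it is precisely what makes the geometric principle ``invariance of $g$ forces invariance of ${}^g\Gamma$'' available for the argument.

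Substituting this identity into the scalar equation, and using that $g_{ij}$ and $\partial\mathscr{U}/\partial x^i$ depend only on $x$ while $\phi_i$ depends only on $(x,\dot{x})$, I would separate by jet order. Since the only $\ddot{x}$-dependence sits in the $a^j$-terms, their coefficient must vanish; a short check shows this coefficient is precisely $(\partial_\xi g)_{ij}=\frac{\partial g_{ij}}{\partial x^l}\xi^l+g_{lj}\frac{\partial\xi^l}{\partial x^i}+g_{il}\frac{\partial\xi^l}{\partial x^j}$, which is the Killing equation (\ref{eq:NBH-F1}). With $\partial_\xi g=0$ in force I invoke the classical fact that a metric Killing field annihilates the Levi-Civita connection, $\partial_\xi\Gamma=0$ (equivalently $(\partial_\xi\Gamma)^i_{jk}=\tfrac12 g^{il}(\nabla_j(\partial_\xi g)_{kl}+\nabla_k(\partial_\xi g)_{jl}-\nabla_l(\partial_\xi g)_{jk})$), which kills the residual quadratic term $g_{ij}(\partial_\xi\Gamma)^j_{ks}\dot{x}^k\dot{x}^s$. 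The surviving $(x,\dot{x})$-terms then group into the gradient of the potential part, $\frac{\partial}{\partial x^i}(\partial_\xi\mathscr{U})$ with $\partial_\xi\mathscr{U}=(\partial\mathscr{U}/\partial x^l)\xi^l$, together with the components (\ref{eq:ForceLieComp}) of the force Lie derivative $\partial_{J^1\xi}\phi$, producing exactly (\ref{eq:NBH-F2}). For the converse, assuming (\ref{eq:NBH-F1}) gives $\partial_\xi\Gamma=0$, so the $\ddot{x}$- and $\dot{x}\dot{x}$-terms drop out automatically, and (\ref{eq:NBH-F2}) forces the remaining terms to vanish; hence the scalar identity, and with it (\ref{eq:NBH-F}), holds. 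Throughout, the logic rests on the independence of the fibre coordinates $(\dot{x}^i,\ddot{x}^i)$, which legitimises the order-by-order matching of coefficients.
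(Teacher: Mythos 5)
Your proposal is correct and follows essentially the same route as the paper's proof: reduce $\partial_{J^{2}\xi}\varepsilon=0$ to the scalar identity $\partial_{J^{2}\xi}\varepsilon_{i}+\varepsilon_{j}\,\partial\xi^{j}/\partial x^{i}=0$, extract the Killing equation from the $\ddot{x}$-coefficient, use the key fact that invariance of $g$ forces $\partial_{\xi}{}^{g}\Gamma=0$ (rather than trying to match the quadratic velocity terms against the $\dot{x}$-dependent force terms, which would be illegitimate), and identify the remainder with (\ref{eq:NBH-F2}). The only difference is organizational: you package the calculation through the covariant transformation law $\partial_{J^{2}\xi}a^{j}=(\partial_{\xi}\Gamma)^{j}_{ks}\dot{x}^{k}\dot{x}^{s}+(\partial\xi^{j}/\partial x^{k})a^{k}$ of the geodesic field $a^{j}=\ddot{x}^{j}+\Gamma^{j}_{ks}\dot{x}^{k}\dot{x}^{s}$, so that $\partial_{\xi}\Gamma$ appears structurally, whereas the paper expands everything in coordinates and recognizes $\partial_{\xi}\Gamma$ afterwards via the Christoffel identities.
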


\begin{proof}
Equivalence of the conditions (a) and (b) follows from chart analysis
of equation (\ref{eq:NBH-F}). Indeed, computing the Lie derivative
of source form $\varepsilon$ (\ref{eq:Source-F}) with respect to
$J^{2}\xi$ (\ref{eq:2-Pro}), we obtain
\begin{equation*}
\begin{split}
\partial_{J^{2}\xi}\varepsilon & =\partial_{J^{2}\xi}(\varepsilon_{i}\omega^{i})\wedge dt=\left(\frac{\partial\varepsilon_{i}}{\partial x^{j}}\xi^{j}+\frac{\partial\varepsilon_{i}}{\partial\dot{x}^{j}}\dot{\xi}^{j}+\frac{\partial\varepsilon_{i}}{\partial\ddot{x}^{j}}\ddot{\xi}^{j}+\frac{\partial\xi^{j}}{\partial x^{i}}\varepsilon_{j}\right)dx^{i}\wedge dt\\
 & =\left(\frac{1}{2}\frac{\partial}{\partial x^{j}}\left(\frac{\partial g_{pq}}{\partial x^{i}}-\frac{\partial g_{ip}}{\partial x^{q}}-\frac{\partial g_{iq}}{\partial x^{p}}\right)\dot{x}^{p}\dot{x}^{q}-\frac{\partial g_{ip}}{\partial x^{j}}\ddot{x}^{p}-\frac{\partial^{2}\mathscr{U}}{\partial x^{i}\partial x^{j}}-\frac{\partial\phi_{i}}{\partial x^{j}}\right)\xi^{j}\\
 & +\left(\left(\frac{\partial g_{pj}}{\partial x^{i}}-\frac{\partial g_{ip}}{\partial x^{j}}-\frac{\partial g_{ij}}{\partial x^{p}}\right)\dot{x}^{p}-\frac{\partial\phi_{i}}{\partial\dot{x}^{j}}\right)\frac{\partial\xi^{j}}{\partial x^{k}}\dot{x}^{k}-g_{ij}\left(\frac{\partial^{2}\xi^{j}}{\partial x^{k}\partial x^{l}}\dot{x}^{k}\dot{x}^{l}+\frac{\partial\xi^{j}}{\partial x^{k}}\ddot{x}^{k}\right)\\
 & +\frac{\partial\xi^{j}}{\partial x^{i}}\left(\frac{1}{2}\left(\frac{\partial g_{pq}}{\partial x^{j}}-\frac{\partial g_{jp}}{\partial x^{q}}-\frac{\partial g_{jq}}{\partial x^{p}}\right)\dot{x}^{p}\dot{x}^{q}-g_{jp}\ddot{x}^{p}-\frac{\partial\mathscr{U}}{\partial x^{j}}-\phi_{j}\right)dx^{i}\wedge dt
\end{split}
\end{equation*}
\begin{equation*}
\begin{split}
 & =\left(\frac{1}{2}\frac{\partial}{\partial x^{j}}\left(\frac{\partial g_{pq}}{\partial x^{i}}-\frac{\partial g_{ip}}{\partial x^{q}}-\frac{\partial g_{iq}}{\partial x^{p}}\right)\xi^{j}\dot{x}^{p}\dot{x}^{q}\right.\\
 & +\frac{1}{2}\left(\left(\frac{\partial g_{qj}}{\partial x^{i}}-\frac{\partial g_{iq}}{\partial x^{j}}-\frac{\partial g_{ij}}{\partial x^{q}}\right)\frac{\partial\xi^{j}}{\partial x^{p}}+\left(\frac{\partial g_{pj}}{\partial x^{i}}-\frac{\partial g_{ip}}{\partial x^{j}}-\frac{\partial g_{ij}}{\partial x^{p}}\right)\frac{\partial\xi^{j}}{\partial x^{q}}\right)\dot{x}^{p}\dot{x}^{q}\\
 & +\frac{1}{2}\left(\frac{\partial g_{pq}}{\partial x^{j}}-\frac{\partial g_{jp}}{\partial x^{q}}-\frac{\partial g_{jq}}{\partial x^{p}}\right)\frac{\partial\xi^{j}}{\partial x^{i}}\dot{x}^{p}\dot{x}^{q}-g_{ij}\frac{\partial^{2}\xi^{j}}{\partial x^{p}\partial x^{q}}\dot{x}^{p}\dot{x}^{q}\\
 & -\frac{\partial\mathscr{U}}{\partial x^{j}}\frac{\partial\xi^{j}}{\partial x^{i}}-\frac{\partial^{2}\mathscr{U}}{\partial x^{i}\partial x^{j}}\xi^{j}-\phi_{j}\frac{\partial\xi^{j}}{\partial x^{i}}-\frac{\partial\phi_{i}}{\partial x^{j}}\xi^{j}-\frac{\partial\phi_{i}}{\partial\dot{x}^{j}}\frac{\partial\xi^{j}}{\partial x^{k}}\dot{x}^{k}\\
 & \left.-\left(\frac{\partial g_{ik}}{\partial x^{j}}\xi^{j}+g_{ij}\frac{\partial\xi^{j}}{\partial x^{k}}+g_{jk}\frac{\partial\xi^{j}}{\partial x^{i}}\right)\ddot{x}^{k}\right)dx^{i}\wedge dt.
\end{split}
\end{equation*}
Hence $\partial_{J^{2}\xi}\varepsilon$ vanishes identically if and
only if the following two conditions are satisfied:
\begin{equation}
\frac{\partial g_{ik}}{\partial x^{j}}\xi^{j}+g_{ij}\frac{\partial\xi^{j}}{\partial x^{k}}+g_{jk}\frac{\partial\xi^{j}}{\partial x^{i}}=0,\label{eq:Aux2}
\end{equation}
which is nothing but (\ref{eq:NBH-F1}), and
\begin{equation*}
\begin{split}
 & \frac{1}{2}\frac{\partial}{\partial x^{j}}\left(\frac{\partial g_{pq}}{\partial x^{i}}-\frac{\partial g_{ip}}{\partial x^{q}}-\frac{\partial g_{iq}}{\partial x^{p}}\right)\xi^{j}\dot{x}^{p}\dot{x}^{q}\\
 & +\frac{1}{2}\left(\left(\frac{\partial g_{qj}}{\partial x^{i}}-\frac{\partial g_{iq}}{\partial x^{j}}-\frac{\partial g_{ij}}{\partial x^{q}}\right)\frac{\partial\xi^{j}}{\partial x^{p}}+\left(\frac{\partial g_{pj}}{\partial x^{i}}-\frac{\partial g_{ip}}{\partial x^{j}}-\frac{\partial g_{ij}}{\partial x^{p}}\right)\frac{\partial\xi^{j}}{\partial x^{q}}\right)\dot{x}^{p}\dot{x}^{q}\\
 & +\frac{1}{2}g^{rj}\left(\frac{\partial g_{jp}}{\partial x^{q}}+\frac{\partial g_{jq}}{\partial x^{p}}-\frac{\partial g_{pq}}{\partial x^{j}}\right)\left(\frac{\partial g_{ir}}{\partial x^{s}}\xi^{s}+g_{is}\frac{\partial\xi^{s}}{\partial x^{r}}\right)\dot{x}^{p}\dot{x}^{q}-g_{ij}\frac{\partial^{2}\xi^{j}}{\partial x^{p}\partial x^{q}}\dot{x}^{p}\dot{x}^{q}\\
 & -\frac{\partial\mathscr{U}}{\partial x^{j}}\frac{\partial\xi^{j}}{\partial x^{i}}-\frac{\partial^{2}\mathscr{U}}{\partial x^{i}\partial x^{j}}\xi^{j}-\phi_{j}\frac{\partial\xi^{j}}{\partial x^{i}}-\frac{\partial\phi_{i}}{\partial x^{j}}\xi^{j}-\frac{\partial\phi_{i}}{\partial\dot{x}^{j}}\frac{\partial\xi^{j}}{\partial x^{k}}\dot{x}^{k}=0.
\end{split}
\end{equation*}
Applying the following standard identities,
\begin{equation*}
\begin{split}
 & \frac{\partial g_{ij}}{\partial x^{k}}=g_{is}\Gamma_{jk}^{s}+g_{js}\Gamma_{ik}^{s},\quad g_{ik}\Gamma_{pq}^{k}=\frac{1}{2}\left(\frac{\partial g_{ip}}{\partial x^{q}}+\frac{\partial g_{iq}}{\partial x^{p}}-\frac{\partial g_{pq}}{\partial x^{i}}\right),
\end{split}
\end{equation*}
the latter condition can be also rewritten as
\begin{equation}
\begin{split}
 & g_{ik}(\partial_{\xi}\Gamma)_{pq}^{k}\dot{x}^{p}\dot{x}^{q}+\frac{\partial}{\partial x^{i}}\left(\frac{\partial\mathscr{U}}{\partial x^{j}}\xi^{j}\right)+\phi_{j}\frac{\partial\xi^{j}}{\partial x^{i}}+\frac{\partial\phi_{i}}{\partial x^{j}}\xi^{j}+\frac{\partial\phi_{i}}{\partial\dot{x}^{j}}\frac{\partial\xi^{j}}{\partial x^{k}}\dot{x}^{k}=0,\label{eq:Aux1}
\end{split}
\end{equation}
where
\begin{equation}
(\partial_{\xi}\Gamma)_{pq}^{k}=\frac{\partial\Gamma_{pq}^{k}}{\partial x^{j}}\xi^{j}-\Gamma_{pq}^{j}\frac{\partial\xi^{k}}{\partial x^{j}}+\Gamma_{qj}^{k}\frac{\partial\xi^{j}}{\partial x^{p}}+\Gamma_{pj}^{k}\frac{\partial\xi^{j}}{\partial x^{q}}+\frac{\partial^{2}\xi^{k}}{\partial x^{p}\partial x^{q}}\label{eq:ConnecionLie}
\end{equation}
represent the Lie derivative $\partial_{\xi}\Gamma$ (\ref{eq:LieDerConnection})
of the Levi-Civita connection $\Gamma=(\Gamma_{pq}^{k})$ with respect
to $\xi$. However, since $^{g}\Gamma$ is invariant provided $g$
is invariant, it follows from (\ref{eq:Aux2}) that also (\ref{eq:ConnecionLie})
vanishes. Thus, (\ref{eq:Aux1}) already implies (\ref{eq:NBH-F2}).
This proves equivalence of the conditions (a) and (b).
\end{proof}
Reducing the form of external force $\phi=(\phi_{i})$, we obtain
important simplifications of Theorem \ref{thm:Noether-BH}. Consider
the following cases:

I. $\phi$ is identically zero, i.e. (\ref{eq:ELeq-F}) coincide with
the Euler\textendash Lagrange equations, associated with $\lambda=\mathscr{L}dt$.

II. $\phi$ is \emph{conservative}, i.e. its components $\phi_{i}$
are of the form
\begin{equation}
\phi_{i}=-\frac{\partial\mathscr{U}}{\partial x^{i}}.\label{eq:Conservative}
\end{equation}

III. $\phi$ is defined on $M$, i.e. its components $\phi_{i}$ are
functions of $x^{j}$ variables only.

IV. $\phi$ is \emph{dissipative}, i.e. its components $\phi_{i}$
are of the form

\begin{equation}
\phi_{i}=-\phi_{ij}\dot{x}^{j},\label{eq:DissipativeForce}
\end{equation}
where $\phi_{ij}:M\rightarrow\mathbb{R}$ is a~symmetric matrix,
and $\phi_{i}=-\partial F/\partial\dot{x}^{i}$ for a~\emph{dissipative
function},
\begin{equation}
F=\frac{1}{2}\phi_{ij}\dot{x}^{i}\dot{x}^{j},\label{eq:DissipativeFun}
\end{equation}
which is a~posite definite quadratic form in the dot variables (cf.
\cite{Landau}).

V. $\phi$ is \emph{variational}, i.e. $\phi_{i}$ coincide with the
Euler\textendash Lagrange expressions of a~Lagrange function, hence
\begin{equation}
\phi_{i}=\frac{\partial h}{\partial x^{i}}+\left(\frac{\partial\eta_{i}}{\partial x^{j}}-\frac{\partial\eta_{j}}{\partial x^{i}}\right)\dot{x}^{j},\label{eq:VariationalForce}
\end{equation}
for some functions $h=h(x^{j})$, $\eta_{i}=\eta_{i}(x^{j})$, on
$M$ (cf. \cite{Krupka-Forces}), which are free parameters. An alternative
approach using variational multipliers has been applied in \cite{Honein,Chien}.
\begin{remark}
For a~variational external force $\phi$ (Case V.), the source form
$\varepsilon$ (\ref{eq:Source-F}), associated to $\lambda=\mathscr{L}dt$
(\ref{eq:Lagrangian-T-U}) and $\phi$, is again variational and it
coincides with the Euler\textendash Lagrange form $E_{\tilde{\lambda}}$
of the Lagrangian $\tilde{\lambda}=\tilde{\mathscr{L}}dt$, where
\begin{equation}
\tilde{\mathscr{L}}=\mathscr{L}-h+\eta_{j}\dot{x}^{j}.\label{eq:AssocLagran}
\end{equation}
Note also that the external force $\tilde{\phi}=\tilde{\phi}_{i}dx^{i}$,
where 
\begin{equation}
\begin{split}
 & \tilde{\phi}_{i}=\phi_{i}-\frac{\partial h}{\partial x^{i}}=\left(\frac{\partial\eta_{i}}{\partial x^{j}}-\frac{\partial\eta_{j}}{\partial x^{i}}\right)\dot{x}^{j},\label{eq:AssocDissForce}
\end{split}
\end{equation}
is dissipative, see Case IV. for 
\[
\phi_{ij}=\frac{\partial\eta_{j}}{\partial x^{i}}-\frac{\partial\eta_{i}}{\partial x^{j}}.
\]
\end{remark}

\begin{corollary}
\label{cor:Case}Let $\varepsilon$ be a source form on $\mathbb{R}\times T^{2}M$,
associated with $\lambda=\mathscr{L}dt$ (\ref{eq:Lagrangian-T-U})
and $\phi=(\phi_{i})$, given by Cases I.\textendash V. A vector field
$\xi$ on $M$ is a~generator of invariance transformations of $\varepsilon$
if and only if $\xi$, $\mathscr{U}$, and $\phi=(\phi_{i})$ satisfy

\emph{(Case I.)}
\begin{equation}
\begin{split}
 & \partial_{\xi}g=0,\quad\partial_{\xi}\mathscr{U}=const.\label{eq:Case1}
\end{split}
\end{equation}

\emph{(Case II.)}
\begin{equation}
\partial_{\xi}g=0.\label{eq:Case2}
\end{equation}

\emph{(Case III.)}
\begin{equation}
\begin{split}
 & \partial_{\xi}g=0,\quad(\partial_{\xi}\phi)_{i}+\frac{\partial}{\partial x^{i}}(\partial_{\xi}\mathscr{U})=0.\label{eq:Case3}
\end{split}
\end{equation}

\emph{(Case IV.)}
\begin{equation}
\begin{split}
 & \partial_{\xi}g=0,\quad\partial_{J^{1}\xi}\phi=0,\quad\partial_{\xi}\mathscr{U}=const.\label{eq:Case4}
\end{split}
\end{equation}

\emph{(Case V.)}
\begin{equation}
\begin{split}
 & \partial_{\xi}g=0,\quad\partial_{J^{1}\xi}\tilde{\phi}=0,\quad\partial_{\xi}(\mathscr{U}+h)=const.,\label{eq:Case5}
\end{split}
\end{equation}
where $\tilde{\phi}$ is $1$-form given by (\ref{eq:AssocDissForce}).
\end{corollary}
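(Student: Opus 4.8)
The plan is to obtain all five cases as specializations of Theorem \ref{thm:Noether-BH}: in every case the first condition (\ref{eq:NBH-F1}), namely $\partial_{\xi}g=0$, is inherited verbatim, so the whole task reduces to rewriting the second condition (\ref{eq:NBH-F2}) after substituting the particular form of $\phi=(\phi_i)$ into the Lie-derivative components (\ref{eq:ForceLieComp}). The recurring structural fact I would exploit is that $\frac{\partial}{\partial x^i}(\partial_{\xi}\mathscr{U})$ is a function of the base coordinates $x^j$ alone, whereas $(\partial_{J^1\xi}\phi)_i$ inherits its $\dot{x}$-dependence from $\phi$; whenever these two terms sit in different homogeneity degrees in the velocities, (\ref{eq:NBH-F2}) splits into independent equations.

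First the two easy cases. For Case I, $\phi\equiv 0$ annihilates the second and third terms of (\ref{eq:ForceLieComp}), so (\ref{eq:NBH-F2}) collapses to $\frac{\partial}{\partial x^i}(\partial_{\xi}\mathscr{U})=0$ for every $i$, i.e. $\partial_{\xi}\mathscr{U}=\mathrm{const}$, giving (\ref{eq:Case1}). For Case III, $\phi_i=\phi_i(x^j)$ kills the last ($\dot{x}$-linear) term of (\ref{eq:ForceLieComp}), and the surviving terms $\frac{\partial\xi^j}{\partial x^i}\phi_j+\frac{\partial\phi_i}{\partial x^j}\xi^j$ are precisely the components $(\partial_{\xi}\phi)_i$ of the Lie derivative of the $1$-form $\phi$ on $M$; inserting this into (\ref{eq:NBH-F2}) yields (\ref{eq:Case3}) immediately.

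The key computation is Case II, which then drives Cases IV and V. Substituting the conservative form (\ref{eq:Conservative}), $\phi_i=-\partial\mathscr{U}/\partial x^i$ (a function of $x$ alone), into (\ref{eq:ForceLieComp}) gives $(\partial_{J^1\xi}\phi)_i=-\frac{\partial\xi^j}{\partial x^i}\frac{\partial\mathscr{U}}{\partial x^j}-\frac{\partial^2\mathscr{U}}{\partial x^i\partial x^j}\xi^j$, while the product rule gives $\frac{\partial}{\partial x^i}(\partial_{\xi}\mathscr{U})=\frac{\partial\xi^j}{\partial x^i}\frac{\partial\mathscr{U}}{\partial x^j}+\xi^j\frac{\partial^2\mathscr{U}}{\partial x^i\partial x^j}$. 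These two expressions are exactly opposite, so (\ref{eq:NBH-F2}) holds identically and only $\partial_{\xi}g=0$ remains, which is (\ref{eq:Case2}). For Case IV I would insert the dissipative form (\ref{eq:DissipativeForce}) into (\ref{eq:ForceLieComp}); since $\phi_{ij}=\phi_{ij}(x)$, each of the three resulting terms is homogeneous of degree one in $\dot{x}$, whereas $\frac{\partial}{\partial x^i}(\partial_{\xi}\mathscr{U})$ is of degree zero. Hence (\ref{eq:NBH-F2}) separates by $\dot{x}$-degree into $\frac{\partial}{\partial x^i}(\partial_{\xi}\mathscr{U})=0$ (that is $\partial_{\xi}\mathscr{U}=\mathrm{const}$) together with $(\partial_{J^1\xi}\phi)_i=0$, i.e. $\partial_{J^1\xi}\phi=0$, giving (\ref{eq:Case4}).

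Finally, Case V combines the two mechanisms. I would split the variational force (\ref{eq:VariationalForce}) as $\phi_i=\partial h/\partial x^i+\tilde{\phi}_i$, with $\tilde{\phi}_i$ the $\dot{x}$-linear part (\ref{eq:AssocDissForce}), and use the linearity of (\ref{eq:ForceLieComp}) in $\phi$. Applying the Case II identity to the conservative piece $\partial h/\partial x^i$ (conservative with potential $-h$) gives $(\partial_{J^1\xi}(\partial h))_i=\frac{\partial}{\partial x^i}(\partial_{\xi}h)$, so the $x$-dependent part of (\ref{eq:NBH-F2}) assembles into $\frac{\partial}{\partial x^i}\bigl(\partial_{\xi}(\mathscr{U}+h)\bigr)$, while $\tilde{\phi}$, being degree one in $\dot{x}$, contributes $(\partial_{J^1\xi}\tilde{\phi})_i$. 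Separating by $\dot{x}$-degree exactly as in Case IV yields $\partial_{\xi}(\mathscr{U}+h)=\mathrm{const}$ and $\partial_{J^1\xi}\tilde{\phi}=0$, which is (\ref{eq:Case5}); this is consistent with the Remark, where the same combination $\mathscr{U}+h$ appears through $\tilde{\mathscr{L}}$. The only genuinely delicate point throughout is the exact cancellation in Case II; once that identity is secured, the degree-in-$\dot{x}$ splitting driving Cases IV and V is routine.
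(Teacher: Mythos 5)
Your proposal is correct and follows exactly the route of the paper's own (one-line) proof: substitute each form of $\phi$ into condition (\ref{eq:NBH-F2}) of Theorem \ref{thm:Noether-BH}, keeping $\partial_{\xi}g=0$ verbatim. You merely make explicit what the paper leaves implicit --- the exact cancellation $(\partial_{J^{1}\xi}\phi)_{i}=-\frac{\partial}{\partial x^{i}}(\partial_{\xi}\mathscr{U})$ in Case II and the splitting of (\ref{eq:NBH-F2}) by homogeneity degree in $\dot{x}$ in Cases IV and V --- and both of these details are verified correctly.
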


\begin{proof}
Conditions (\ref{eq:Case1})\textendash (\ref{eq:Case5}) follow from
Theorem \ref{thm:Noether-BH}, where we substitute the form of external
force $\phi=(\phi_{i})$ with respect to Cases I\textendash V.
\end{proof}
The Noether theorem (Theorem \ref{thm:Noether}) and its extension
(Theorem \ref{thm:NoetherExtended}) have the following implications
for the mechanical Lagrangian (\ref{eq:Lagrangian-T-U}) and for its
completion by means of a~variatitonal external force.
\begin{corollary}
\label{cor:Noether-Add}Let $E_{\lambda}$ be the Euler\textendash Lagrange
form, associated to Lagrangian $\lambda=\mathscr{L}dt$ (\ref{eq:Lagrangian-T-U}).
Let $\xi$ be a vector field on $M$, and $\gamma$ be an extremal
for $\lambda$. Then

\emph{(i)} for every generator $\xi$ (\ref{eq:VectorField}) of invariance
transformations of $\lambda$,
\begin{equation}
dJ^{1}\gamma{}^{*}\left(g_{ij}\xi^{i}\dot{x}^{j}\right)=0,\label{eq:ConservedLambda}
\end{equation}
i.e. $\gamma$ is a geodesic with respect to the Levi-Civita connection
$^{g}\Gamma$.

\emph{(ii)} for every generator $\xi$ (\ref{eq:VectorField}) of
invariance transformations of $E_{\lambda}$,
\begin{equation}
\begin{split}
 & dJ^{1}\gamma{}^{*}\left(g_{ij}\xi^{i}\dot{x}^{j}+\mathscr{U}_{0}t\right)=0,\label{eq:ConservedE}
\end{split}
\end{equation}
where $\mathscr{U}_{0}\in\mathbb{R}$ is given by
\[
\mathscr{U}_{0}=\frac{\partial\mathscr{U}}{\partial x^{j}}\xi^{j}.
\]
\end{corollary}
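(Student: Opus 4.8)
The plan is to obtain both conservation laws directly from the two Noether theorems already established, the whole computation hinging on a single contraction, namely $i_{J^{1}\xi}\Theta_{\lambda}=g_{ij}\xi^{i}\dot{x}^{j}$. Since $\xi=\xi^{i}\partial/\partial x^{i}$ is a vector field on $M$ it has no time component, so that $dt(J^{1}\xi)=0$ and $\omega^{i}(J^{1}\xi)=\xi^{i}$; inserting the Cartan form (\ref{eq:CartanForm}) gives $i_{J^{1}\xi}\Theta_{\lambda}=(\partial\mathscr{L}/\partial\dot{x}^{i})\xi^{i}$, and because $\partial\mathscr{L}/\partial\dot{x}^{i}=g_{ij}\dot{x}^{j}$ for $\mathscr{L}=\mathscr{T}-\mathscr{U}$ in (\ref{eq:Lagrangian-T-U})--(\ref{eq:Kinetic}), the claimed expression follows.

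For (i) I would simply feed this contraction into Noether's theorem (Theorem \ref{thm:Noether}): for a generator $\xi$ of invariance transformations of $\lambda$ the current $d(J^{1}\gamma{}^{*}i_{J^{1}\xi}\Theta_{\lambda})$ vanishes along any extremal $\gamma$, which is exactly (\ref{eq:ConservedLambda}). By Lemma \ref{lem:NoetherSym} such a $\xi$ is Killing ($\partial_{\xi}g=0$) with $\partial_{\xi}\mathscr{U}=0$, so the conserved quantity $g_{ij}\xi^{i}\dot{x}^{j}=\langle\xi,\dot{\gamma}\rangle$ is precisely the first integral carried by a Killing field; in the kinetic regime, where the Euler--Lagrange equations reduce to $g_{ij}(\ddot{x}^{j}+\Gamma^{j}_{ks}\dot{x}^{k}\dot{x}^{s})=0$, the extremal is a geodesic of $^{g}\Gamma$, which is the content of the concluding remark.

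For (ii) I would invoke Theorem \ref{thm:NoetherExtended}, which for a generator $\xi$ of invariance of $E_{\lambda}$ furnishes a function $f$ with $d(J^{1}\gamma{}^{*}(i_{J^{1}\xi}\Theta_{\lambda}+f))=0$; the task reduces to pinning down $f$. Here $E_{\lambda}$ is the source form $\varepsilon$ of (\ref{eq:Source-F}) with $\phi=0$, so Theorem \ref{thm:Noether-BH} (equivalently Case I of Corollary \ref{cor:Case}) gives $\partial_{\xi}g=0$ and $\partial_{\xi}\mathscr{U}=\mathscr{U}_{0}=\mathrm{const}$, with $\mathscr{U}_{0}=(\partial\mathscr{U}/\partial x^{j})\xi^{j}$. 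The computation in the proof of Lemma \ref{lem:NoetherSym} then collapses to $\partial_{J^{1}\xi}\lambda=\tfrac{1}{2}(\partial_{\xi}g)_{jk}\dot{x}^{j}\dot{x}^{k}-\partial_{\xi}\mathscr{U}=-\mathscr{U}_{0}\,dt$, and since $\mathscr{U}_{0}$ is constant this equals the horizontalization of the exact form $d(-\mathscr{U}_{0}t)$. Substituting into the infinitesimal first variation formula (\ref{eq:FirstVar}) along the extremal $\gamma$, where the $i_{J^{2}\xi}E_{\lambda}$ term drops out, yields $d(-\mathscr{U}_{0}t)=d(J^{1}\gamma{}^{*}i_{J^{1}\xi}\Theta_{\lambda})=d(J^{1}\gamma{}^{*}(g_{ij}\xi^{i}\dot{x}^{j}))$, which rearranges to (\ref{eq:ConservedE}) and identifies $f=\mathscr{U}_{0}t$.

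The routine parts are the two contractions and the substitution into Lemma \ref{lem:NoetherSym}. The step demanding most care is the explicit identification of $f$ in (ii): one must recognize $\partial_{J^{1}\xi}\lambda$ as a variationally trivial Lagrangian and write it as $h\,d(-\mathscr{U}_{0}t)$, which genuinely uses that $\mathscr{U}_{0}$ is constant (the symmetry condition of Case I, not merely $\partial_{\xi}\mathscr{U}=0$ as in (i)), and then track the pullback $J^{1}\gamma{}^{*}$ through the first variation formula with correct signs. I would also flag that the geodesic reading in (i) is exact only when the potential exerts no force along the motion; otherwise (\ref{eq:ConservedLambda}) is the Killing first integral rather than a statement that $\gamma$ solves the bare geodesic equation of $^{g}\Gamma$.
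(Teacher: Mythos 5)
Your proposal is correct and follows essentially the same route as the paper: part (i) is Theorem \ref{thm:Noether} applied with the contraction $i_{J^{1}\xi}\Theta_{\lambda}=g_{ij}\xi^{i}\dot{x}^{j}$, and part (ii) is the first variation formula reduced along the extremal, combined with Case I of Corollary \ref{cor:Case} to get $\partial_{J^{1}\xi}\lambda=-\mathscr{U}_{0}\,dt$ and hence the correction term $f=\mathscr{U}_{0}t$, exactly as in the paper's proof. The only divergence is in the geodesic clause of (i): the paper supports it by the explicit Killing-equation computation $dJ^{1}\gamma^{*}\left(g_{ij}\xi^{i}\dot{x}^{j}\right)=J^{2}\gamma^{*}\left(g_{ik}\xi^{i}\left(\ddot{x}^{k}+\Gamma_{js}^{k}\dot{x}^{j}\dot{x}^{s}\right)\right)dt$, whereas you omit that computation but correctly flag (more carefully than the paper's own wording) that the conservation law only forces the $\xi$-contracted geodesic expression to vanish.
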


\begin{proof}
1. Suppose $\xi$ is a~generator of invariance transformations of
$\lambda$. Substituting the form of Lagrangian $\lambda$ (\ref{eq:Lagrangian-T-U})
and the expression of the Cartan form,
\[
\Theta_{\lambda}=\left(\mathscr{L}-\frac{\partial\mathscr{L}}{\partial\dot{x}^{i}}\dot{x}^{i}\right)dt+\frac{\partial\mathscr{L}}{\partial\dot{x}^{i}}dx^{i},
\]
into Theorem \ref{thm:Noether}, we easily get the Noether current
given by (\ref{eq:ConservedLambda}). Moreover, using condition $\partial_{\xi}g=0$
(\ref{eq:NoetherSymCond}) from Lemma (\ref{lem:NoetherSym}), we
obtain
\begin{equation*}
\begin{split}
dJ^{1}\gamma{}^{*}\left(g_{ij}\xi^{i}\dot{x}^{j}\right) & =J^{1}\gamma{}^{*}d\left(g_{ij}\xi^{i}\dot{x}^{j}\right)\\
 & =J^{2}\gamma{}^{*}\left(\left(\frac{\partial g_{ij}}{\partial x^{k}}\xi^{i}+g_{ij}\frac{\partial\xi^{i}}{\partial x^{k}}\right)\dot{x}^{j}\dot{x}^{k}+g_{ik}\xi^{i}\ddot{x}^{k}\right)dt\\
 & =J^{2}\gamma{}^{*}\left(g_{ik}\xi^{i}\left(\ddot{x}^{k}+\Gamma_{js}^{k}\dot{x}^{j}\dot{x}^{s}\right)\right)dt.
\end{split}
\end{equation*}
Hence (\ref{eq:ConservedLambda}) holds if and only if $\gamma$ is
a~geodesic w.r.t. $(\Gamma_{jk}^{i})$.

2. Suppose $\xi$ is a~generator of invariance transformations of
$E_{\lambda}$. Since $\gamma$ is an extremal for $\lambda$, the
first variation formula (\ref{eq:FirstVar}) reduces to
\begin{equation}
J^{1}\gamma{}^{*}\partial_{J^{1}\xi}\lambda=d\left(J^{1}\gamma{}^{*}i_{J^{1}\xi}\Theta_{\lambda}\right).\label{eq:1stVarRed}
\end{equation}
From Corollary \ref{cor:Case} (Case I.), we have $\partial_{\xi}g=0$,
and
\begin{equation*}
\begin{split}
 & \frac{\partial\mathscr{U}}{\partial x^{j}}\xi^{j}=\mathscr{U}_{0}\in\mathbb{R},
\end{split}
\end{equation*}
hence the Lie derivative $\partial_{J^{1}\xi}\lambda$ of Lagrangian
(\ref{eq:Lagrangian-T-U}) reads
\begin{equation*}
\begin{split}
\partial_{J^{1}\xi}\lambda & =\left(\left(\frac{1}{2}\frac{\partial g_{jk}}{\partial x^{i}}\dot{x}^{j}\dot{x}^{k}-\frac{\partial\mathscr{U}}{\partial x^{i}}\right)\xi^{i}+g_{ij}\dot{x}^{j}\dot{\xi}^{i}\right)dt\\
 & =\left(\frac{1}{2}\left(\frac{\partial g_{jk}}{\partial x^{i}}\xi^{i}+g_{ij}\frac{\partial\xi^{i}}{\partial x^{k}}+g_{ik}\frac{\partial\xi^{i}}{\partial x^{j}}\right)\dot{x}^{j}\dot{x}^{k}-\frac{\partial\mathscr{U}}{\partial x^{i}}\xi^{i}\right)dt\\
 & =-\mathscr{U}_{0}dt.
\end{split}
\end{equation*}
Since $\partial_{J^{2}\xi}E_{\lambda}$ vanishes, we see at once that
$\partial_{J^{1}\xi}\lambda$ is a~variationally trivial Lagrangian
by means of formula (\ref{eq:ELformLie}). Applying this expression
into (\ref{eq:1stVarRed}), we get
\[
J^{2}\gamma{}^{*}\left(\mathscr{U}_{0}+g_{ik}\xi^{i}\left(\ddot{x}^{k}+\Gamma_{js}^{k}\dot{x}^{j}\dot{x}^{s}\right)\right)dt=0,
\]
or, equivalently, (\ref{eq:ConservedE}). This result completes Theorem
\ref{thm:NoetherExtended} for the mechanical Lagrangian $\lambda=(\mathscr{T}-\mathscr{U})dt$
(\ref{eq:Lagrangian-T-U}) whose Lie derivative $\partial_{J^{1}\xi}\lambda$
is of order zero.
\end{proof}
\begin{corollary}
Let $\varepsilon$ (\ref{eq:Source-F}) be the source form, associated
to Lagrangian $\lambda=\mathscr{L}dt$ (\ref{eq:Lagrangian-T-U})
and variational force $\phi=(\phi_{i})$ (\ref{eq:VariationalForce}).
Let $\xi$ be a vector field on $M$, and $\gamma$ be an extremal
for $\tilde{\lambda}$ (\ref{eq:AssocLagran}). Then

\emph{(i)} for every generator $\xi$ (\ref{eq:VectorField}) of invariance
transformations of $\tilde{\lambda}$,
\begin{equation}
dJ^{1}\gamma{}^{*}\left(\left(g_{ij}\dot{x}^{j}+\eta_{i}\right)\xi^{i}\right)=0.\label{eq:ConservedLambda-1}
\end{equation}

\emph{(ii)} for every generator $\xi$ (\ref{eq:VectorField}) of
invariance transformations of $E_{\tilde{\lambda}}$,
\begin{equation}
\begin{split}
 & dJ^{1}\gamma{}^{*}\left(\left(g_{ij}\dot{x}^{j}+\eta_{i}\right)\xi^{i}+\mathscr{U}_{0}t-f\right)=0,\label{eq:ConservedE-1}
\end{split}
\end{equation}
where $\mathscr{U}_{0}=\partial_{\xi}\mathscr{U}\in\mathbb{R}$ and
$f$ is a~solution of the first-order partial differential equation
\begin{equation}
\frac{\partial\eta_{j}}{\partial x^{i}}\xi^{i}+\eta_{i}\frac{\partial\xi^{i}}{\partial x^{j}}=\frac{\partial f}{\partial x^{j}},\label{eq:Aux3}
\end{equation}
which is integrable, and on a~star-shaped domain a solution of (\ref{eq:Aux3})
reads
\[
f=x^{j}\int_{0}^{1}\left(\frac{\partial\eta_{j}}{\partial x^{i}}\xi^{i}+\eta_{i}\frac{\partial\xi^{i}}{\partial x^{j}}\right)_{\left(sx^{k}\right)}ds.
\]
\end{corollary}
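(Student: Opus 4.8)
The plan is to treat the two parts as applications, respectively, of the classical Noether Theorem \ref{thm:Noether} and of its extension Theorem \ref{thm:NoetherExtended}, applied to the associated Lagrangian $\tilde{\lambda}=\tilde{\mathscr{L}}dt$ with $\tilde{\mathscr{L}}=\mathscr{L}-h+\eta_{j}\dot{x}^{j}$; recall from the Remark that for a variational force the source form $\varepsilon$ is itself variational, $\varepsilon=E_{\tilde{\lambda}}$, so that the hypotheses on $\gamma$ and $\xi$ refer to $\tilde{\lambda}$ and $E_{\tilde{\lambda}}$.

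For part (i) I would first write the Cartan form $\Theta_{\tilde{\lambda}}=\tilde{\mathscr{L}}dt+(\partial\tilde{\mathscr{L}}/\partial\dot{x}^{i})\omega^{i}$ and note that $\partial\tilde{\mathscr{L}}/\partial\dot{x}^{i}=g_{ij}\dot{x}^{j}+\eta_{i}$. Since $\xi$ is a vector field on $M$, its prolongation $J^{1}\xi$ has no $\partial/\partial t$ component, whence $i_{J^{1}\xi}dt=0$ and $i_{J^{1}\xi}\omega^{i}=\xi^{i}$; thus $i_{J^{1}\xi}\Theta_{\tilde{\lambda}}=(g_{ij}\dot{x}^{j}+\eta_{i})\xi^{i}$. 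Theorem \ref{thm:Noether}, applied to the invariant Lagrangian $\tilde{\lambda}$ and the extremal $\gamma$, then gives $d(J^{1}\gamma^{*}i_{J^{1}\xi}\Theta_{\tilde{\lambda}})=0$, which is precisely (\ref{eq:ConservedLambda-1}). This part is immediate.

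For part (ii) the strategy is to compute $\partial_{J^{1}\xi}\tilde{\lambda}$ explicitly and recognize it as a horizontal total derivative, so that the reduced first variation formula (\ref{eq:1stVarRed}), now written for $\tilde{\lambda}$, produces the corrected current. Splitting $\tilde{\mathscr{L}}=\mathscr{T}-(\mathscr{U}+h)+\eta_{j}\dot{x}^{j}$ and invoking the Case V conditions $\partial_{\xi}g=0$ and $\partial_{\xi}(\mathscr{U}+h)=\mathrm{const}$, the kinetic contribution drops out exactly as in Lemma \ref{lem:NoetherSym}, the potential contribution reduces to a constant (the quantity denoted $\mathscr{U}_{0}$ in the statement, up to sign), and the term $\eta_{j}\dot{x}^{j}$ yields the covector $(\partial\eta_{j}/\partial x^{i})\xi^{i}+\eta_{i}(\partial\xi^{i}/\partial x^{j})$ contracted with $\dot{x}^{j}$. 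I would then observe that this covector is exactly the left-hand side of (\ref{eq:Aux3}), so that with $f$ a solution one has $\partial_{J^{1}\xi}\tilde{\lambda}=h(d(f-\mathscr{U}_{0}t))$, which is variationally trivial; substituting into (\ref{eq:1stVarRed}) and using $i_{J^{1}\xi}\Theta_{\tilde{\lambda}}$ from part (i) gives $d(J^{1}\gamma^{*}((g_{ij}\dot{x}^{j}+\eta_{i})\xi^{i}+\mathscr{U}_{0}t-f))=0$, i.e. (\ref{eq:ConservedE-1}).

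The remaining and most delicate point is the integrability of (\ref{eq:Aux3}). Here I would remark that its left-hand side is precisely the component $(\partial_{\xi}\eta)_{j}$ of the Lie derivative of the $1$-form $\eta=\eta_{i}dx^{i}$, so that solvability for $f$ is equivalent to $\partial_{\xi}\eta$ being closed, that is to $\partial_{\xi}(d\eta)=0$. Since $d\eta$ governs the associated dissipative force $\tilde{\phi}_{i}=(\partial\eta_{i}/\partial x^{j}-\partial\eta_{j}/\partial x^{i})\dot{x}^{j}$, the Case V invariance condition $\partial_{J^{1}\xi}\tilde{\phi}=0$ is exactly $\partial_{\xi}(d\eta)=0$; hence $\partial_{\xi}\eta$ is closed, and the Poincar\'e lemma on a star-shaped domain gives, through the standard homotopy operator, the explicit formula for $f$ displayed in the statement. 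I expect this identification --- that the Case V condition on $\tilde{\phi}$ is exactly what guarantees integrability --- to be the main conceptual step, the rest being bookkeeping already carried out in the earlier proofs.
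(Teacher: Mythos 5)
Your proposal is correct and follows essentially the same route as the paper: part (i) is Theorem \ref{thm:Noether} applied to $\tilde{\lambda}$ with $i_{J^{1}\xi}\Theta_{\tilde{\lambda}}=(g_{ij}\dot{x}^{j}+\eta_{i})\xi^{i}$, and part (ii) computes $\partial_{J^{1}\xi}\tilde{\lambda}$ under the Case V conditions, recognizes it as the horizontal part of $d(f-\mathscr{U}_{0}t)$, and feeds it into the reduced first variation formula. Your only addition is to spell out what the paper dismisses as ``easy to see'': that the left-hand side of (\ref{eq:Aux3}) is $(\partial_{\xi}\eta)_{j}$, that $\partial_{J^{1}\xi}\tilde{\phi}=0$ amounts to $\partial_{\xi}(d\eta)=d(\partial_{\xi}\eta)=0$, and that the displayed formula for $f$ is the Poincar\'e homotopy operator on a star-shaped domain --- a welcome clarification, but the same argument.
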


\begin{proof}
1. Analogously as in Corollary \ref{cor:Noether-Add}, formula (\ref{eq:ConservedLambda-1})
follows from the Noether theorem \ref{thm:Noether}.

2. Suppose $\xi$ is a~generator of invariance transformations of
$E_{\tilde{\lambda}}$ and $\gamma$ is an extremal for $\tilde{\lambda}$.
Applying Corollary \ref{cor:Case} (Case V.), namely the identities
$\partial_{\xi}g=0$, and $\partial_{\xi}(\mathscr{U}+h)=\mathscr{U}_{0}\in\mathbb{R}$,
we get
\begin{equation*}
\begin{split}
\partial_{J^{1}\xi}\tilde{\lambda} & =\left(\left(\frac{\partial\eta_{j}}{\partial x^{i}}\xi^{i}+\eta_{i}\frac{\partial\xi^{i}}{\partial x^{j}}\right)\dot{x}^{j}-\mathscr{U}_{0}\right)dt.
\end{split}
\end{equation*}
However, it easy to see that the identity $\partial\tilde{\phi}=0$
(\ref{eq:Case5}) is an integrability condition of the equation
\[
\left(\frac{\partial\eta_{j}}{\partial x^{i}}\xi^{i}+\eta_{i}\frac{\partial\xi^{i}}{\partial x^{j}}\right)\dot{x}^{j}=\frac{df}{dt}
\]
for unknown function $f=f(x^{k})$. Computing the contraction of the
Cartan form $i_{J^{1}\xi}\Theta_{\tilde{\lambda}}$, we get from the
first variation formula (\ref{eq:FirstVar}) the Noether conservation
law (\ref{eq:ConservedE-1}).
\end{proof}

\section{Examples}

\subsection{The damped harmonic oscillator}

Consider the equations of motion of the \emph{damped oscillator} in
$\mathbb{R}\times M$, where $M$ is an open subset of the Euclidean
space $\mathbb{R}^{m}$,
\begin{equation}
m_{ij}\ddot{x}^{j}+k_{ij}x^{j}=-\phi_{ij}\dot{x}^{j},\label{eq:DampedOscillatorEq}
\end{equation}
$i=1,\ldots,m$, where both the mass coefficients $m_{ij}$ and the
potential energy coefficients $k_{ij}$ are given by symmetric matrices over
the field of real numbers $\mathbb{R}$. Moreover, we assume that
$m_{ij}$ is a~non-singular; if this is not the case, we simply relax
the \emph{regularity condition} on the metric tensor $g$, cf. Section
3 (\ref{eq:Kinetic}). System (\ref{eq:DampedOscillatorEq}) belongs
to the class of mechanical systems with dissipative external forces,
cf. (\ref{eq:DissipativeForce}). The left-hand sides of (\ref{eq:DampedOscillatorEq})
are the Euler\textendash Lagrange expressions of the \emph{free oscillator}
Lagrangian
\begin{equation}
\mathscr{L}=\mathscr{T}-\mathscr{U}=\frac{1}{2}\left(m_{ij}\dot{x}^{i}\dot{x}^{j}-k_{ij}x^{i}x^{j}\right),\label{eq:LagrangianOsc}
\end{equation}
see (\ref{eq:Lagrangian-T-U}), and the \emph{energy} of the system
equals
\begin{equation}
\begin{split}
 & \frac{\partial\mathscr{L}}{\partial\dot{x}^{i}}\dot{x}^{i}-\mathscr{L}=\mathscr{T}+\mathscr{U}=\frac{1}{2}\left(m_{ij}\dot{x}^{i}\dot{x}^{j}+k_{ij}x^{i}x^{j}\right).\label{eq:EnergyOsc}
\end{split}
\end{equation}

The external force $\phi=(\phi_{i})$, see (\ref{eq:Force}), (\ref{eq:DissipativeForce}),
is given by 
\begin{equation}
\phi_{i}=-\phi_{ij}\dot{x}^{j},\label{eq:DissForceComp}
\end{equation}
also called the generalized frictional forces, see \cite{Landau}.
Note that system (\ref{eq:DampedOscillatorEq}) is \emph{not} variational
unless $\phi_{ij}=0$, which contradicts the posite definiteness of
the dissipative function (\ref{eq:DissipativeFun}). Source form $\varepsilon$,
associated with Lagrange function (\ref{eq:LagrangianOsc}) and dissipative
force $\phi$, is expressed by $\varepsilon=\varepsilon_{i}\omega^{i}\wedge dt$
(\ref{eq:Source-F}), where
\begin{equation}
\begin{split}
\varepsilon_{i} & =m_{ij}\ddot{x}^{j}+k_{ij}x^{j}+\phi_{ij}\dot{x}^{j}.\label{eq:SourceOsc}
\end{split}
\end{equation}

Let $\xi$ be a vector field on $M$, locally expressed by (\ref{eq:VectorField}),
and let $J^{2}\xi$ (\ref{eq:2-Pro}), (\ref{eq:2-ProV}), be its
second jet prolongation. Applying Corollary \ref{cor:Case} of Theorem
\ref{thm:Noether-BH}, we obtain conditions on symmetry $\xi$ on
$M$, potential coefficients $k_{ij}$, and dissipative force coefficients
$\phi_{ij}$ such that the Noether\textendash Bessel-Hagen equation
(\ref{eq:NBH-F}) for source form $\varepsilon$ holds.
\begin{theorem}
\label{cor:OSC}Let $\varepsilon$ (\ref{eq:SourceOsc}) be a source
form on $\mathbb{R}\times T^{2}M$, associated to the free oscillator
Lagrangian $\lambda=\mathscr{L}dt$ (\ref{eq:LagrangianOsc}) and
a dissipative force $\phi=(\phi_{1},\phi_{2})$ (\ref{eq:DissForceComp}).
Let $\xi$ be a~vector field on $M$, expressed by (\ref{eq:VectorField}).

\emph{(a)} $\xi$ is a~generator of invariance transformations of
$\lambda$ if and only if
\begin{equation}
\begin{split}
 & m_{ip}\frac{\partial\xi^{i}}{\partial x^{q}}+m_{iq}\frac{\partial\xi^{i}}{\partial x^{p}}=0,\label{eq:LambdaInvOscillator-Eq1}
\end{split}
\end{equation}
and
\begin{equation}
k_{ij}x^{j}\xi^{i}=0.\label{eq:LambdaInvOscillator-Eq2}
\end{equation}
The Noether conserved current (Corollary \ref{cor:Noether-Add}, (\ref{eq:ConservedLambda}))
along extremals for the free harmonic oscillator (i.e. $\phi=0$)
reads
\begin{equation}
\begin{split}
L & =m_{ij}\xi^{i}\dot{x}^{j}.\label{eq:ConservedCurrent-FreeOsc}
\end{split}
\end{equation}

\emph{(b)} $\xi$ is a~generator of invariance transformations of
$\varepsilon$ if and only if $\xi^{i}$ and $\phi_{ij}$ satisfy
(\ref{eq:LambdaInvOscillator-Eq1}) and
\begin{equation}
\begin{split}
 & k_{ij}\xi^{j}+k_{js}x^{s}\frac{\partial\xi^{j}}{\partial x^{i}}=0,\label{eq:SymEpsOsc-Eq2}
\end{split}
\end{equation}
\begin{equation}
\begin{split}
 & \phi_{is}\frac{\partial\xi^{s}}{\partial x^{j}}+\phi_{js}\frac{\partial\xi^{s}}{\partial x^{i}}+\frac{\partial\phi_{ij}}{\partial x^{s}}\xi^{s}=0,\label{eq:SymEpsOsc-Eq3}
\end{split}
\end{equation}
For $\phi=0$, along extremals for the free harmonic oscillator the
Noether\textendash Bessel-Hagen conserved current (Corollary \ref{cor:Noether-Add},
(\ref{eq:ConservedE})) reads
\begin{equation*}
\begin{split}
L= & m_{ij}\xi^{i}\dot{x}^{j}+k_{ij}x^{i}\xi^{j}t.
\end{split}
\end{equation*}
\end{theorem}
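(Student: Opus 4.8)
The plan is to read off Theorem~\ref{cor:OSC} as the specialization of the general framework of Section~3 to the case where the metric is the \emph{constant} matrix $g_{ij}=m_{ij}$, the potential is the quadratic form $\mathscr{U}=\frac{1}{2}k_{ij}x^ix^j$, and the external force is dissipative (Case IV) with coefficients $\phi_{ij}$. The decisive simplification is that $g_{ij}=m_{ij}$ has vanishing partial derivatives, so all Christoffel symbols $\Gamma^i_{jk}$ vanish identically; this trivializes both the Killing equation and the connection-invariance step that does the real work in the proof of Theorem~\ref{thm:Noether-BH}.

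For part~(a) I would invoke Lemma~\ref{lem:NoetherSym} directly. Writing out $\partial_\xi g=0$ through the general Lie-derivative formula and discarding the $\partial g/\partial x$ term (null for a constant metric) reduces it at once to $m_{ip}\,\partial\xi^i/\partial x^q+m_{iq}\,\partial\xi^i/\partial x^p=0$, i.e.~\eqref{eq:LambdaInvOscillator-Eq1}. For the second Killing condition I would compute $\partial_\xi\mathscr{U}=(\partial\mathscr{U}/\partial x^i)\xi^i=k_{ij}x^j\xi^i$ using the symmetry of $k_{ij}$, so that $\partial_\xi\mathscr{U}=0$ is exactly \eqref{eq:LambdaInvOscillator-Eq2}. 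The conserved current \eqref{eq:ConservedCurrent-FreeOsc} then follows from Corollary~\ref{cor:Noether-Add}(i) upon substituting $g_{ij}=m_{ij}$ into $g_{ij}\xi^i\dot x^j$.

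For part~(b) I would apply Corollary~\ref{cor:Case}, Case IV, whose three conditions read $\partial_\xi g=0$, $\partial_{J^1\xi}\phi=0$, and $\partial_\xi\mathscr{U}=\mathrm{const}$. The first is again \eqref{eq:LambdaInvOscillator-Eq1}. For the third I would differentiate $\partial_\xi\mathscr{U}=k_{ij}x^j\xi^i$ with respect to $x^i$; requiring $\partial(\partial_\xi\mathscr{U})/\partial x^i=0$ produces $k_{ij}\xi^j+k_{js}x^s\,\partial\xi^j/\partial x^i=0$, i.e.~\eqref{eq:SymEpsOsc-Eq2}, again via symmetry of $k_{ij}$. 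For the second I would substitute the dissipative form $\phi_i=-\phi_{ij}\dot x^j$ into the component formula \eqref{eq:ForceLieComp}; since the resulting expression is purely linear in $\dot x^k$, vanishing of $\partial_{J^1\xi}\phi$ is equivalent to vanishing of the coefficient of each $\dot x^k$, which—after symmetrizing with $\phi_{ij}=\phi_{ji}$—is precisely \eqref{eq:SymEpsOsc-Eq3}. Finally, in the free-oscillator case $\phi=0$ the form $\varepsilon$ coincides with $E_\lambda$ and \eqref{eq:SymEpsOsc-Eq3} is vacuous, so Corollary~\ref{cor:Noether-Add}(ii) applies with $\mathscr{U}_0=(\partial\mathscr{U}/\partial x^j)\xi^j=k_{ij}x^i\xi^j$, yielding the stated Noether--Bessel-Hagen current $m_{ij}\xi^i\dot x^j+k_{ij}x^i\xi^j\,t$.

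I do not expect any genuine obstacle, as the whole argument is a transcription through the dictionary (metric $\leftrightarrow m_{ij}$, potential $\leftrightarrow\frac12 k_{ij}x^ix^j$, dissipative force $\leftrightarrow\phi_{ij}$). The only step demanding care is the index bookkeeping in expanding $\partial_{J^1\xi}\phi$, where one must use $\phi_{ij}=\phi_{ji}$ to recast the three terms into the symmetric form \eqref{eq:SymEpsOsc-Eq3}; the constancy of $m_{ij}$ removes the connection-invariance argument entirely, so no further analytic input is needed.
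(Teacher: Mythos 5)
Your proposal is correct and follows essentially the same route as the paper, whose proof likewise reads (a) and (b) off as reformulations of Lemma~\ref{lem:NoetherSym} and of Theorem~\ref{thm:Noether-BH} via Corollary~\ref{cor:Case}, Case~IV, identifying \eqref{eq:LambdaInvOscillator-Eq1} with $\partial_{\xi}g=0$, \eqref{eq:LambdaInvOscillator-Eq2} and \eqref{eq:SymEpsOsc-Eq2} with $\partial_{\xi}\mathscr{U}=0$ (respectively $\partial_{\xi}\mathscr{U}=\mathrm{const}$), and \eqref{eq:SymEpsOsc-Eq3} with $\partial_{J^{1}\xi}\phi=0$. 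Your explicit expansions (the constant-metric Killing equation, the gradient of $\partial_{\xi}\mathscr{U}$, and the $\dot{x}$-linear expansion of $(\partial_{J^{1}\xi}\phi)_{i}$ using $\phi_{ij}=\phi_{ji}$) merely spell out details the paper leaves implicit, and the currents are obtained from Corollary~\ref{cor:Noether-Add} exactly as intended.
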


\begin{proof}
Assertions (a) and (b) are reformulations of Lemma \ref{lem:NoetherSym}
and Theorem \ref{thm:Noether-BH}, respectively. Indeed, conditions
(\ref{eq:LambdaInvOscillator-Eq1}), correspond to $\partial_{\xi}g=0$,
(\ref{eq:LambdaInvOscillator-Eq2}), (\ref{eq:SymEpsOsc-Eq2}) to
$\partial_{\xi}\mathscr{U}=0$, and (\ref{eq:SymEpsOsc-Eq3}) corresponds
to $\partial_{J^{1}\xi}\phi=0$, cf. Corollary \ref{cor:Case}, (\ref{eq:Case4}).
\end{proof}
\begin{remark}
In the case of the Euclidean metric $m_{ij}=k_{ij}=\delta_{ij}$,
conditions (\ref{eq:LambdaInvOscillator-Eq1}), (\ref{eq:LambdaInvOscillator-Eq2})
possess a~unique solution of the form $\xi^{i}=P_{j}^{i}x^{j}$,
where $P_{j}^{i}$ are some real numbers such that $P_{j}^{i}=-P_{i}^{j}$.
The Noether current (\ref{eq:ConservedCurrent-FreeOsc}) is $L=\delta_{ij}\xi^{i}\dot{x}^{j}$.
If $m=3$, for the $3$-dimensional free oscillator $L$ is expressed
as
\begin{equation*}
\begin{split}
L= & \sum_{j=1}^{3}\xi^{j}\dot{x}^{j}\\
= & -P_{2}^{1}r^{2}\dot{\varphi}\sin^{2}\vartheta+P_{3}^{1}r^{2}\left(\dot{\vartheta}\cos\varphi-\dot{\varphi}\sin\vartheta\cos\vartheta\sin\varphi\right)\\
 & +P_{3}^{2}r^{2}\left(\dot{\vartheta}\sin\varphi+\dot{\varphi}\sin\vartheta\cos\vartheta\cos\varphi\right),
\end{split}
\end{equation*}
which is a~linear combination of components of the \emph{angular
momentum} in the spherical coordinates $(r,\varphi,\vartheta)$ on
$M$.

Conditions (\ref{eq:LambdaInvOscillator-Eq1}), (\ref{eq:SymEpsOsc-Eq2}),
(\ref{eq:SymEpsOsc-Eq3}), with respect to the Euclidean metric, possess
a~unique solution of the form $\xi^{i}=P_{j}^{i}x^{j}$, where $P_{j}^{i}\in\mathbb{R}$
satisfy $P_{j}^{i}=-P_{i}^{j}$, and
\[
\phi_{is}P_{j}^{s}+\phi_{js}P_{i}^{s}+\frac{\partial\phi_{ij}}{\partial x^{s}}P_{l}^{s}x^{l}=0.
\]
For $\phi=0,$ the Noether\textendash Bessel-Hagen conserved current
(\ref{eq:ConservedE}) reads
\begin{equation*}
\begin{split}
L= & \sum_{j}\xi^{j}\dot{x}^{j}+\sum_{j}\xi^{j}x^{j}t
\end{split}
\end{equation*}
and it coincides with the angular momentum $L$ since $\xi^{j}x^{j}$
vanishes.
\end{remark}

\subsection{Scalar-tensor cosmological models in two-dimensional configuration space} \label{scalar-tensor exp}
Extended and modified theories of gravity \cite{Cai:2015emx,Capozziello:2011et} acquired a lot of interest in the recent years  due to some inconsistencies provided by   Einstein General Relativity which presents some shortcomings at ultraviolet and infrared scales. As an example, a grand unification theory in which gravity and the other fundamental interactions are included is missing so far, if one considers General Relativity as the final theory  describing the gravitational force. More fundamentally,  finding a link between General Relativity and Quantum Mechanics is one of the main goal of recent physics, and, despite of many attempts, Einstein's theory seems to be not the best candidate to this purpose. According to this point of view, many other approaches have been developed   as \emph{String Theory} \cite{Green:1987sp,Green:1987mn,Polchinski:1998rq,Polchinski:1998rr,Becker:2007zj}, \emph{Kaluza-Klein Theory} \cite{Clifton:2011jh,Han:1998sg}, \emph{Loop Quantum Gravity} \cite{Rovelli:2014ssa,Ashtekar:2011ni,Rovelli:1997yv}, \emph{Horava-Lifshitz Gravity} \cite{Kiritsis:2009sh,Cai:2009pe,Sotiriou:2010wn,Mukohyama:2010xz}, \emph{Non-Local Gravity} \cite{ArkaniHamed:2002fu,Modesto:2013jea,Modesto:2013ioa} \emph{etc.}. While General-Relativity turns out to be non-renormalizable from the two-loop level, the above mentioned theories are both renormalizable and unitary. 

However, ultra-violet scale issues do not represent the only problems of General Relativity: at the low-scale regime there are several incongruities between theory and observations emerging at astrophysical and cosmological scales. Some of them are  the anomalous accelerated expansion of the universe, the flat rotation curves of galaxies, the dynamics of cluster of galaxies  \emph{etc.}. These shortcomings can be overcame by assuming huge amount of dark energy and dark matter or relaxing the  strict hypothesis that  General Relativity is the theory describing gravitational interaction at any scale. Proposals are   modifying the Hilbert-Einstein action \cite{Sotiriou:2008rp,DeFelice:2010aj},  including the torsion \cite{Cai:2015emx,Hammond:2002rm,Arcos:2005ec}, and studying the related dynamics  \cite{Capozziello:2019klx,Capozziello:2011et,Capozziello:2007ec,Clifton:2011jh}.

Another issue of General Relativity is related  the early cosmology; the  accepted paradigm is that of \emph{inflation} which involves a scalar field  (or more than one scalar field), the \emph{inflaton} $\varphi$, to  generate  the primordial  accelerated phase capable of addressing the shortcomings of Cosmological Standard Model  \cite{Guth:1980zm,Guth:1982ec,Linde:1981mu}. In these models,   gravity is assumed  minimally or non-minimally  coupled to  $\varphi$. Theories in which the action contains both the Ricci scalar, the kinetic term $\dot{\phi}^2$, the potential $V(\varphi)$ and a  coupling $F(\varphi)$ are called \emph{scalar-tensor} theories of gravity \cite{Copeland:2006wr,Bezrukov:2007ep}.
A fundamental question is to select theories with reliable $F{\varphi}$ and $V(\varphi)$ capable of giving realistic cosmological models to be compared with observational data.  The Noether Symmetry Approach  proved extremely useful in selecting physically motivated model. See  \cite{Capozziello:1996bi} for a review. 

In this section, we apply the preceding  Noether\textendash Bessel-Hagen
Approach to the dynamical equations arising in  a particular  scalar-tensor cosmological model. As said above, finding  the form of the potential from symmetries is  important to derive   conserved quantities,  to reduce the dynamics and finally to find exact solutions that are always physically motivated \cite{Capozziello:1996bi}. Some examples can be found in \cite{Capozziello:1993vr,Capozziello:1994du,Paliathanasis:2014rja,Capozziello:1993yy,Borowiec:2014wva}.

Let us adopt the above method for   scalar-tensor cosmological models where external forces appear into dynamic with the aim to select by symmetries the scalar-field   potential and possible interacting terms. See for example \cite{ester}.

Let $M$ be an open subset of the Euclidean plane $\mathbb{R}^{2}$. The only coordinates are the cosmological scale factor $a$ and the scalar field $\varphi$, which form a~global chart on $M\subset\mathbb{R}^{2}$.
Consider the\textcolor{red}{{} }Klein-Gordon Lagrange function on $TM$,
given by
\begin{equation*}
\mathscr{L}(a,\varphi,\dot{a},\dot{\varphi})=3a\dot{a}^{2}-a^{3}\left(\frac{1}{2}\dot{\varphi}^{2}-V(\varphi)\right),
\end{equation*}
where $V$ is a~smooth function depending on $\varphi$ only as standard in cosmology \cite{Guth:1980zm,Guth:1982ec,Linde:1981mu,Copeland:2006wr,Bezrukov:2007ep,Capozziello:1993vr,Capozziello:1994du,Paliathanasis:2014rja,Capozziello:1993yy,Borowiec:2014wva}.
This Lagrange function is of the form (\ref{eq:Lagrangian-T-U}),
\begin{equation}
\mathscr{L}=\frac{1}{2}g_{11}\dot{a}^{2}+g_{12}\dot{a}\dot{\varphi}+\frac{1}{2}g_{22}\dot{\varphi}^{2}-\mathscr{U},\label{eq:Lagrange-KG}
\end{equation}
where
\begin{equation*}
g_{11}=6a,\quad g_{12}=0,\quad g_{22}=-a^{3},\quad\mathscr{U}(a,\varphi)=-a^{3}V(\varphi).
\end{equation*}

The associated Euler--Lagrange expressions are
\begin{eqnarray}
-\frac{\partial\mathscr{L}}{\partial a}+\frac{d}{dt}\frac{\partial\mathscr{L}}{\partial\dot{a}} &= -3\dot{a}^{2}+3a^{2}\left(\frac{1}{2}\dot{\varphi}^{2}-V(\varphi)\right)+\frac{d}{dt}\left(6a\dot{a}\right)\label{eq:EL1EGC}\\
&= 6a\ddot{a}+3\dot{a}^{2}+3a^{2}\left(\frac{1}{2}\dot{\varphi}^{2}-V(\varphi)\right),\nonumber
\end{eqnarray}
\begin{eqnarray}
 -\frac{\partial\mathscr{L}}{\partial\varphi}+\frac{d}{dt}\frac{\partial\mathscr{L}}{\partial\dot{\varphi}} &= \displaystyle -a^{3}\frac{dV}{d\varphi}+\frac{d}{dt}\left(-a^{3}\dot{\varphi}\right)=-a^{3}\frac{dV}{d\varphi}-3a^{2}\dot{a}\dot{\varphi}-a^{3}\ddot{\varphi},\label{eq:EL2EGC}
\end{eqnarray}
and the {\emph{energy condition} is
\begin{equation}
\label{eq:EnergyEGC}
\begin{split}
\frac{\partial\mathscr{L}}{\partial\dot{a}}\dot{a}+\frac{\partial\mathscr{L}}{\partial\dot{\varphi}}\dot{\varphi}-\mathscr{L} & =6a\dot{a}^{2}-a^{3}\dot{\varphi}^{2}-3a\dot{a}^{2}+a^{3}\left(\frac{1}{2}\dot{\varphi}^{2}-V(\varphi)\right)\\
 & =3a\dot{a}^{2}-a^{3}\left(\frac{1}{2}\dot{\varphi}^{2}+V(\varphi)\right).
\end{split}
\end{equation}
Hence,  the Euler--Lagrange equations read
\begin{eqnarray}
6a\ddot{a}+3\dot{a}^{2}+\frac{3}{2}a^{2}\dot{\varphi}^{2}-3a^{2}V(\varphi) & =0,\label{eq:ELeq1-EGC}\\
-a^{3}\ddot{\varphi}-3a^{2}\dot{a}\dot{\varphi}-a^{3}\frac{dV}{d\varphi} & =0.\label{eq:ELeq2-EGC}
\end{eqnarray}
Note that this system is equivalent to
\begin{eqnarray}
2\frac{\ddot{a}}{a}+\left(\frac{\dot{a}}{a}\right)^{2}+\frac{1}{2}\dot{\varphi}^{2}-V(\varphi) & =0,\label{eq:Ein1}\\
\ddot{\varphi}+3\frac{\dot{a}}{a}\dot{\varphi}+\frac{dV}{d\varphi} & =0,\label{eq:Ein2}
\end{eqnarray}
which is the system of equations in \cite{Capozziello:1994du}, with $F(\varphi) = 1/2$.
However, (\ref{eq:Ein1}), (\ref{eq:Ein2}) is \emph{not} variational in the sense discussed above.

Consider the equations of motion defined by (\ref{eq:EL1EGC}), (\ref{eq:EL2EGC}),
under influence of an external force $\left(f_{a},f_{\varphi}\right)$,
\begin{eqnarray}
6a\ddot{a}+3\dot{a}^{2}+\frac{3}{2}a^{2}\dot{\varphi}^{2}-3a^{2}V(\varphi) & =\phi_{a}\label{eq:EL1-EGC-Force}\\
-a^{3}\ddot{\varphi}-3a^{2}\dot{a}\dot{\varphi}-a^{3}\frac{dV}{d\varphi} & =\phi_{\varphi},\label{eq:EL2-EGC-Force}
\end{eqnarray}
where $\phi_{a}=\phi_{a}(a,\varphi,\dot{a},\dot{\varphi})$, $\phi_{\varphi}=\phi_{\varphi}(a,\varphi,\dot{a},\dot{\varphi})$
are smooth real-valued functions on $TM$.
The source form $\varepsilon$, associated with Lagrangian $\lambda=\mathscr{L}dt$
(\ref{eq:Lagrange-KG}) and force $\phi=\phi_{a}da+\phi_{\varphi}d\varphi$
(cf. (\ref{eq:Force})) is
\begin{equation}
\varepsilon=\left(\varepsilon_{a}\omega^{a}+\varepsilon_{\varphi}\omega^{\varphi}\right)\wedge dt,\label{eq:Source-EGC-Force}
\end{equation}
where its coefficients $\varepsilon_{a}$, $\varepsilon_{\varphi}$ are
real-valued functions on $T^{2}M$, given by
\begin{equation*}
\begin{split}
\varepsilon_{a} & =6a\ddot{a}+3\dot{a}^{2}+\frac{3}{2}a^{2}\dot{\varphi}^{2}-3a^{2}V(\varphi)-\phi_{a},\\
\varepsilon_{\varphi} & =-a^{3}\ddot{\varphi}-3a^{2}\dot{a}\dot{\varphi}-a^{3}\frac{dV}{d\varphi}-\phi_{\varphi},
\end{split}
\end{equation*}
and $\omega^{a}=da-\dot{a}dt$, $\omega^{\varphi}=d\varphi-\dot{\varphi}dt$
are contact $1$-forms on $\mathbb{R}\times TM$.

Let $\xi$ be a vector field on $M$, locally expressed by
\begin{equation}
\xi=\alpha(a,\varphi)\frac{\partial}{\partial a}+\beta(a,\varphi)\frac{\partial}{\partial\varphi},\label{eq:VectorField-2dim2}
\end{equation}
and let $J^{2}\xi$ be its second jet prolongation (see (\ref{eq:2-ProV})),
\begin{equation}
J^{2}\xi=\alpha\frac{\partial}{\partial a}+\beta\frac{\partial}{\partial\varphi}+\dot{\alpha}\frac{\partial}{\partial\dot{a}}+\dot{\beta}\frac{\partial}{\partial\dot{\varphi}}+\ddot{\alpha}\frac{\partial}{\partial\ddot{a}}+\ddot{\beta}\frac{\partial}{\partial\ddot{\varphi}},\label{eq:2ndProlong-2dim-2}
\end{equation}
where
\begin{equation*}
\begin{split}
 & \dot{\alpha}=\frac{\partial\alpha}{\partial a}\dot{a}+\frac{\partial\alpha}{\partial\varphi}\dot{\varphi},\quad\dot{\beta}=\frac{\partial\beta}{\partial a}\dot{a}+\frac{\partial\beta}{\partial\varphi}\dot{\varphi},\\
 & \ddot{\alpha}=\frac{\partial^{2}\alpha}{\partial a^{2}}\dot{a}^{2}+2\frac{\partial^{2}\alpha}{\partial a\partial\varphi}\dot{a}\dot{\varphi}+\frac{\partial^{2}\alpha}{\partial\varphi^{2}}\dot{\varphi}^{2}+\frac{\partial\alpha}{\partial a}\ddot{a}+\frac{\partial\alpha}{\partial\varphi}\ddot{\varphi},\\
 & \ddot{\beta}=\frac{\partial^{2}\beta}{\partial a^{2}}\dot{a}^{2}+2\frac{\partial^{2}\beta}{\partial a\partial\varphi}\dot{a}\dot{\varphi}+\frac{\partial^{2}\beta}{\partial\varphi^{2}}\dot{\varphi}^{2}+\frac{\partial\beta}{\partial a}\ddot{a}+\frac{\partial\beta}{\partial\varphi}\ddot{\varphi}.
\end{split}
\end{equation*}
The following theorem is a~reformulation of Theorem \ref{thm:Noether-BH}
for source form (\ref{eq:Source-EGC-Force}), associated with the
Klein\textendash Gordon Lagrangian and an external force; in this
case, a part of necessary and sufficient conditions expressed by the
Killing equation $\partial_{\xi}g=0$ (\ref{eq:NBH-F1}) can be integrated,
hence the symmetry $\xi$ is found explicitly.
\begin{theorem}
\label{thm:EGC}Let $\varepsilon$ (\ref{eq:Source-EGC-Force}) be
a source form on $\mathbb{R}\times T^{2}M$, associated to the Klein-Gordon
Lagrangian $\lambda$ and an external force $(\phi_{a},\phi_{\varphi})$
on $TM$. Let $\xi$ be a~vector field on $M$. The following two
conditions are equivalent:

\emph{(a)} A vector field $\xi$ on $M$ is a~generator of invariance
transformations of $\varepsilon$.

\emph{(b)} $\xi$ has an expression (\ref{eq:VectorField-2dim2}),
where
\begin{eqnarray}
& \alpha=A\frac{1}{\sqrt{a}}\left(\exp\left(\sqrt{\frac{3}{8}}\varphi\right)+C\exp\left(-\sqrt{\frac{3}{8}}\varphi\right)\right)\label{eq:NBH-EGCsym1}\\
& \beta=-\sqrt{6}A\frac{1}{a\sqrt{a}}\left(\exp\left(\sqrt{\frac{3}{8}}\varphi\right)-C\exp\left(-\sqrt{\frac{3}{8}}\varphi\right)\right)+B.\label{eq:NBH-EGCsym2}
\end{eqnarray}
Here $A,B\in\mathbb{R}$, and $C>0$ are real numbers, and $\varepsilon$
has an expression (\ref{eq:Source-EGC-Force}), where potential function
$V$ and forces $\phi_{a}$, $\phi_{\varphi}$ satisfy the equations
\begin{equation}
\begin{split}
 & \frac{9}{2}\alpha aV+3\beta a^{2}\frac{dV}{d\varphi}+6\frac{\partial\alpha}{\partial\varphi}a\frac{dV}{d\varphi}-\frac{\partial}{\partial a}\left(\alpha f_{a}\right)-\frac{\partial\beta}{\partial a}f_{\varphi}-\beta\frac{\partial f_{a}}{\partial\varphi}\label{eq:NBH-EGC-1}\\
 & \quad-\frac{\partial\phi_{a}}{\partial\dot{a}}\left(\frac{\partial\alpha}{\partial a}\dot{a}+\frac{\partial\alpha}{\partial\varphi}\dot{\varphi}\right)-\frac{\partial\phi_{a}}{\partial\dot{\varphi}}\left(\frac{\partial\beta}{\partial a}\dot{a}+\frac{\partial\beta}{\partial\varphi}\dot{\varphi}\right)=0,\nonumber 
\end{split}
\end{equation}
and 
\begin{equation}
\begin{split}
 & \frac{3}{2}\alpha a^{2}\frac{dV}{d\varphi}+\beta a^{3}\frac{d^{2}V}{d\varphi^{2}}+3\frac{\partial\alpha}{\partial\varphi}a^{2}V-\frac{\partial}{\partial\varphi}\left(\beta f_{\varphi}\right)-\frac{\partial\alpha}{\partial\varphi}f_{a}-\alpha\frac{\partial f_{\varphi}}{\partial a}\label{eq:NBH-EGC-2}\\
 & -\frac{\partial\phi_{\varphi}}{\partial\dot{a}}\left(\frac{\partial\alpha}{\partial a}\dot{a}+\frac{\partial\alpha}{\partial\varphi}\dot{\varphi}\right)-\frac{\partial\phi_{\varphi}}{\partial\dot{\varphi}}\left(\frac{\partial\beta}{\partial a}\dot{a}+\frac{\partial\beta}{\partial\varphi}\dot{\varphi}\right)=0.\nonumber 
\end{split}
\end{equation}
\end{theorem}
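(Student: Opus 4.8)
The plan is to read off the equivalence (a)~$\Leftrightarrow$~(b) as the specialization of Theorem~\ref{thm:Noether-BH} to the Klein--Gordon data $g_{11}=6a$, $g_{12}=0$, $g_{22}=-a^{3}$ and $\mathscr{U}=-a^{3}V(\varphi)$, with $x^{1}=a$, $x^{2}=\varphi$ and $\xi^{1}=\alpha$, $\xi^{2}=\beta$. By that theorem, $\xi$ generates invariance transformations of $\varepsilon$ if and only if the Killing equation $\partial_{\xi}g=0$ holds together with the two scalar equations (\ref{eq:NBH-F2}). The new feature here is that the first condition can be integrated in closed form, so I would split the argument into (i) integrating $\partial_{\xi}g=0$ to obtain (\ref{eq:NBH-EGCsym1})--(\ref{eq:NBH-EGCsym2}), and (ii) expanding (\ref{eq:NBH-F2}) to obtain (\ref{eq:NBH-EGC-1})--(\ref{eq:NBH-EGC-2}). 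Since each step is an equivalence, chaining them yields the theorem.

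For step (i), I would write out the three independent components of the Killing system (\ref{eq:Aux2}). Because $g$ is diagonal and $g_{11}$ depends on $a$ only, the $(a,a)$-component collapses to $\alpha+2a\,\partial\alpha/\partial a=0$, whence $\alpha=f(\varphi)/\sqrt{a}$ for an arbitrary function $f$; this already accounts for the $1/\sqrt{a}$ prefactor. The $(\varphi,\varphi)$-component gives $\partial\beta/\partial\varphi=-3\alpha/(2a)$, and the mixed $(a,\varphi)$-component gives $\partial\beta/\partial a=(6/a^{2})\,\partial\alpha/\partial\varphi$. The hard part of the whole proof is the integrability condition $\partial^{2}\beta/\partial a\,\partial\varphi=\partial^{2}\beta/\partial\varphi\,\partial a$ for these last two equations: substituting $\alpha=f/\sqrt{a}$ and cancelling the common factor $a^{-5/2}$, it reduces to the ordinary differential equation $f''=\tfrac{3}{8}f$, whose general solution $f(\varphi)=A\bigl(\exp(\sqrt{3/8}\,\varphi)+C\exp(-\sqrt{3/8}\,\varphi)\bigr)$ is exactly (\ref{eq:NBH-EGCsym1}). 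Integrating $\partial\beta/\partial a=(6/a^{2})\,\partial\alpha/\partial\varphi$ in $a$, fixing the $\varphi$-dependent integration term to a constant $B$ via the $(\varphi,\varphi)$-equation, and using the identity $4\sqrt{3/8}=\sqrt{6}$, produces precisely (\ref{eq:NBH-EGCsym2}).

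For step (ii), I would compute $\partial_{\xi}\mathscr{U}=-3a^{2}\alpha V-a^{3}\beta\,dV/d\varphi$ and differentiate it with respect to $a$ and to $\varphi$ as demanded by (\ref{eq:NBH-F2}). Here the Killing relations just obtained---$\partial\alpha/\partial a=-\alpha/(2a)$, $\partial\beta/\partial\varphi=-3\alpha/(2a)$ and $\partial\beta/\partial a=(6/a^{2})\,\partial\alpha/\partial\varphi$---are used to eliminate the derivatives of $\alpha,\beta$, collapsing $\partial(\partial_{\xi}\mathscr{U})/\partial a$ to $-\tfrac{9}{2}a\alpha V-3a^{2}\beta\,dV/d\varphi-6a\,(\partial\alpha/\partial\varphi)\,dV/d\varphi$ and treating $\partial(\partial_{\xi}\mathscr{U})/\partial\varphi$ analogously. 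Expanding the force contribution $(\partial_{J^{1}\xi}\phi)_{i}$ from (\ref{eq:ForceLieComp}) and recognising the combinations $\partial_{a}\alpha\,\phi_{a}+\alpha\,\partial_{a}\phi_{a}=\partial_{a}(\alpha\phi_{a})$ and $\partial_{\varphi}\beta\,\phi_{\varphi}+\beta\,\partial_{\varphi}\phi_{\varphi}=\partial_{\varphi}(\beta\phi_{\varphi})$, the $i=a$ and $i=\varphi$ instances of (\ref{eq:NBH-F2}) become, after an overall sign change, exactly (\ref{eq:NBH-EGC-1}) and (\ref{eq:NBH-EGC-2}). This completes the equivalence and hence the proof; the only substantive computation is the integrability step in step~(i), everything else being routine substitution.
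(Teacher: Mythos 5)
Your proposal is correct and takes essentially the same route as the paper: the theorem is exactly the specialization of Theorem \ref{thm:Noether-BH} to the data $g_{11}=6a$, $g_{12}=0$, $g_{22}=-a^{3}$, $\mathscr{U}=-a^{3}V(\varphi)$, the Killing equation reduces to the same three component equations (\ref{eq:AlfaBeta}), their cross-derivative integrability condition pins down the $\varphi$-dependence of $\alpha$ and forces the integration ``constant'' in $\beta$ to be the constant $B$, and expanding (\ref{eq:NBH-F2}) with $\partial_{\xi}\mathscr{U}=-3a^{2}\alpha V-a^{3}\beta\,dV/d\varphi$ gives (\ref{eq:NBH-EGC-1})--(\ref{eq:NBH-EGC-2}) after the sign change you note. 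The only cosmetic difference is in solving the integrability condition: you reduce it to the linear ODE $f''=\tfrac{3}{8}f$ for $\alpha=f(\varphi)/\sqrt{a}$, whereas the paper writes $\alpha=c\,e^{h(\varphi)}/\sqrt{a}$ and solves the equivalent nonlinear equation $h''+(h')^{2}=\tfrac{3}{8}$ (the two are related by $f=c\,e^{h}$), so your variant is, if anything, slightly more direct.
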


\begin{proof}
Computing the Lie derivative of $\varepsilon$ (\ref{eq:Source-EGC-Force})
with respect to $J^{2}\xi$, we obtain, in a~straightforward way, that
the Noether\textendash Bessel-Hagen equation is equivalent to the
following system of conditions,
\begin{equation}
\alpha+2a\frac{\partial\alpha}{\partial a}=0,\quad\frac{\partial\beta}{\partial a}a^{2}-6\frac{\partial\alpha}{\partial\varphi}=0,\quad3\alpha+2\frac{\partial\beta}{\partial\varphi}a=0.\label{eq:AlfaBeta}
\end{equation}

Equations (\ref{eq:AlfaBeta}) for components $\alpha$, $\beta$
of vector field $\xi$ can be integrated. From the first equation
of (\ref{eq:AlfaBeta}), we get
\[
\alpha=c\frac{1}{\sqrt{a}}\exp\left(h(\varphi)\right)
\]
for some function $h=h(\varphi)$ and some $c\in\mathbb{R}$. Differentiating
$\alpha$, we have
\begin{equation*}
\begin{split}
 & \frac{\partial\alpha}{\partial\varphi}=\alpha\frac{dh}{d\varphi},\quad\frac{\partial^{2}\alpha}{\partial\varphi^{2}}=\alpha\left(\frac{dh}{d\varphi}\frac{dh}{d\varphi}+\frac{d^{2}h}{d\varphi^{2}}\right),\\
 & \frac{\partial\alpha}{\partial a}=-\frac{1}{2a}\alpha,\quad\frac{\partial^{2}\alpha}{\partial a\partial\varphi}=-\frac{1}{2a}\alpha\frac{dh}{d\varphi},
\end{split}
\end{equation*}
and, from the second and third equations of (\ref{eq:AlfaBeta}), we
get
\[
\frac{\partial^{2}\beta}{\partial a\partial\varphi}=\frac{6}{a^{2}}\alpha\left(\frac{dh}{d\varphi}\frac{dh}{d\varphi}+\frac{d^{2}h}{d\varphi^{2}}\right)=\frac{9}{4a^{2}}\alpha=\frac{\partial^{2}\beta}{\partial\varphi\partial a},
\]
hence we get a second-order ordinary equation for $h=h(\varphi)$,
\[
\frac{d^{2}h}{d\varphi^{2}}+\frac{dh}{d\varphi}\frac{dh}{d\varphi}=\frac{3}{8},
\]
which is directly integrable with the solution
\begin{equation}
h(\varphi)=\ln\left(\exp\left(\sqrt{\frac{3}{2}}\varphi\right)+C\right)-\sqrt{\frac{3}{8}}\varphi+const\label{eq:h-funkce}
\end{equation}
for some real numbers $C>0$, $c_{0}\in\mathbb{R}$. Thus, $\alpha$
becomes of the form (\ref{eq:NBH-EGCsym1}), where $A=c\exp\left(c_{0}\right)$.

The third equation of (\ref{eq:AlfaBeta}) now reads
\[
\frac{\partial\beta}{\partial\varphi}=-\frac{3}{2a}A\frac{1}{\sqrt{a}}\left(\exp\left(\sqrt{\frac{3}{8}}\varphi\right)+C\exp\left(-\sqrt{\frac{3}{8}}\varphi\right)\right),
\]
which can be directly integrated, hence
\[
\beta=-\sqrt{6}A\frac{1}{a\sqrt{a}}\left(\exp\left(\sqrt{\frac{3}{8}}\varphi\right)-C\exp\left(-\sqrt{\frac{3}{8}}\varphi\right)\right)+p(a)
\]
for some function $p=p(a)$. Substituing this form of $\beta$ into
the second equation of (\ref{eq:AlfaBeta}), we easily observe that
$p$ must be constant, $p(a)=B\in\mathbb{R}$, that is, $\beta$ is
given by (\ref{eq:NBH-EGCsym2}). The pair $\alpha$, $\beta$ is
the unique solution of system (\ref{eq:AlfaBeta}). Note that components
of $\alpha$ (\ref{eq:NBH-EGCsym1}), $\beta$ (\ref{eq:NBH-EGCsym2})
of the~generator of invariance transformations of $\varepsilon$
are dependent,
\begin{equation*}
\begin{split}
 & \beta-B=-4\frac{1}{a}\frac{dh}{d\varphi}\alpha,
\end{split}
\end{equation*}
where $h=h(\varphi)$ is given by (\ref{eq:h-funkce}).
\end{proof}
In the next propositions, we analyze necessary and sufficient conditions
for potential function $V$ such that $\xi$ is a symmetry of Lagrangian
$\lambda$, and conditions for $V$ and external forces $(\phi_{a},\phi_{\varphi})$
such that vector field $\xi$ is a symmetry of source form $\varepsilon$.
Moreover, \emph{exact solutitons} of equations of motion (\ref{eq:ELeq1-EGC}),
(\ref{eq:ELeq2-EGC}), obeying energy condition (\ref{eq:EnergyEGC}),
are studied.
\begin{corollary}
\label{cor:SolLambda}Let $\lambda$ be the Klein-Gordon Lagrangian,
$\lambda=\mathscr{L}dt$ (\ref{eq:Lagrange-KG}). A~vector field
$\xi$ (\ref{eq:VectorField-2dim2}) on $TM$ is a~generator of invariance
transformations of $\lambda$ if and only if its coefficients $\alpha$,
$\beta$ are of the form (\ref{eq:NBH-EGCsym1}), (\ref{eq:NBH-EGCsym2}),
and one of the following conditions is satisfied:

\emph{(i)} $\alpha=0\,\,(A=0)$, $\beta=B\neq0$, and $V=V(\varphi)$
is constant. In this case, the equations of motion (\ref{eq:ELeq1-EGC}),
(\ref{eq:ELeq2-EGC}) and the energy condition (\ref{eq:EnergyEGC})
have a~solution for $V=0$ only, namely the mapping $\gamma:\mathbb{R}\rightarrow TM$
given by
\begin{equation}
\begin{split}
\gamma(t) & =\left(a(\gamma(t)),\varphi(\gamma(t))\right)=\left(k_{1}\sqrt[3]{3t+k_{2}},\pm\sqrt{\frac{2}{3}}\ln\left|3t+k_{2}\right|+k_{3}\right),\quad k_{1},k_{2},k_{3}\in\mathbb{R}.\label{eq:Sol1}
\end{split}
\end{equation}
The Noether conserved current (Corollary \ref{cor:Noether-Add}, (\ref{eq:ConservedLambda}))
along extremals (solutions of (\ref{eq:ELeq1-EGC}), (\ref{eq:ELeq2-EGC}))
reads
\[
L=g_{ij}\xi^{i}\dot{x}^{j}=-Ba^{3}\dot{\varphi}.
\]

\emph{(ii)} $\alpha\neq0\,\,(A\neq0),$ $B=0$, and $V=k\left(\exp\left(\sqrt{\frac{3}{8}}\varphi\right)-C\exp\left(-\sqrt{\frac{3}{8}}\varphi\right)\right)^{2}$,
where $k>0$. In this case, the equations of motion (\ref{eq:ELeq1-EGC}),
(\ref{eq:ELeq2-EGC}) and the energy condition (\ref{eq:EnergyEGC})
have a~solution $\gamma:\mathbb{R}\rightarrow TM$ given by 
\begin{equation*}
\begin{split}
 & \frac{\ddot{a}}{a}+2\frac{\dot{a}^{2}}{a^{2}}-V(\varphi)=0,\quad\ddot{\varphi}+3\frac{\dot{a}}{a}\dot{\varphi}+\frac{dV}{d\varphi}=0,\\
 & 3\frac{\dot{a}^{2}}{a^{2}}-\frac{1}{2}\dot{\varphi}^{2}-V(\varphi)=0.
\end{split}
\end{equation*}
The Noether conserved current (Corollary \ref{cor:Noether-Add}, (\ref{eq:ConservedLambda}))
along extremals (solutions of (\ref{eq:ELeq1-EGC}), (\ref{eq:ELeq2-EGC}))
reads
\[
L=g_{ij}\xi^{i}\dot{x}^{j}=6a\dot{a}\alpha.
\]
\end{corollary}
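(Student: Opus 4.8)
The plan is to reduce invariance of the Klein--Gordon Lagrangian $\lambda=\mathscr{L}dt$ (\ref{eq:Lagrange-KG}) to the two Killing-type conditions of Lemma \ref{lem:NoetherSym}, namely $\partial_{\xi}g=0$ together with $\partial_{\xi}\mathscr{U}=0$, and then to exploit that the first of these has \emph{already} been solved in Theorem \ref{thm:EGC}. Indeed, for this metric the equation $\partial_{\xi}g=0$ is exactly the system (\ref{eq:AlfaBeta}), whose unique solution is the pair $\alpha,\beta$ of (\ref{eq:NBH-EGCsym1})--(\ref{eq:NBH-EGCsym2}); one checks this by writing out $(\partial_{\xi}g)_{11},(\partial_{\xi}g)_{12},(\partial_{\xi}g)_{22}$ for $g_{11}=6a$, $g_{12}=0$, $g_{22}=-a^{3}$. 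Hence the only content beyond Theorem \ref{thm:EGC} is carried by the scalar equation $\partial_{\xi}\mathscr{U}=0$ and by the subsequent study of the equations of motion. Both directions of the stated equivalence then follow from Lemma \ref{lem:NoetherSym}: if $\xi$ is a symmetry of $\lambda$ the two conditions hold, and conversely if $\alpha,\beta$ have the form of Theorem \ref{thm:EGC} and $V$ is as in (i) or (ii) then $\partial_{\xi}g=0$ and $\partial_{\xi}\mathscr{U}=0$, so $\xi$ is a symmetry.

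Next I would impose $\partial_{\xi}\mathscr{U}=0$ for $\mathscr{U}=-a^{3}V(\varphi)$. Writing $V'=dV/d\varphi$,
\[
\partial_{\xi}\mathscr{U}=\alpha\frac{\partial\mathscr{U}}{\partial a}+\beta\frac{\partial\mathscr{U}}{\partial\varphi}=-3a^{2}\alpha V-a^{3}\beta\frac{dV}{d\varphi},
\]
so that, after division by $a^{2}$, the symmetry condition reads $3\alpha V+a\beta\,dV/d\varphi=0$. The clean way to disentangle the $a$-dependence is to use the dependence relation $\beta-B=-\tfrac{4}{a}h'(\varphi)\alpha$ recorded at the end of Theorem \ref{thm:EGC}, which gives $a\beta=aB-4h'\alpha$ and hence $3\alpha V+a\beta V'=\alpha\bigl(3V-4h'V'\bigr)+aB\,V'$. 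Since $\alpha=A a^{-1/2}(E_{+}+CE_{-})$ scales as $a^{-1/2}$ while the remaining term scales as $a^{1}$, and $V,V',h'$ depend only on $\varphi$, the two powers of $a$ must vanish separately, with $E_{\pm}=\exp(\pm\sqrt{3/8}\,\varphi)$:
\[
B\frac{dV}{d\varphi}=0,\qquad \alpha\Bigl(3V-4h'\frac{dV}{d\varphi}\Bigr)=0 .
\]

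The case analysis now reads off the two families. If $A=0$ then $\alpha\equiv0$, the second equation is vacuous, and nontriviality of $\xi$ forces $B\neq0$, whence $dV/d\varphi=0$ and $V$ is constant: this is case (i). If $A\neq0$, then $B\neq0$ would force $dV/d\varphi=0$ and then $V=0$ from the second equation (a degenerate symmetry), so a genuine potential requires $B=0$; the second equation becomes the separable ODE
\[
\frac{1}{V}\frac{dV}{d\varphi}=\frac{3}{4h'}=\sqrt{\tfrac{3}{2}}\,\frac{E_{+}+CE_{-}}{E_{+}-CE_{-}}=2\,\frac{(E_{+}-CE_{-})'}{E_{+}-CE_{-}},
\]
using $4h'=\sqrt{6}\,(E_{+}-CE_{-})/(E_{+}+CE_{-})$, and integrating yields $V=k\,(E_{+}-CE_{-})^{2}$ with $k>0$: this is case (ii). Recognising the right-hand side as a logarithmic derivative is the one genuinely computational step; the identity $2\sqrt{3/8}=\sqrt{3/2}$ is exactly what produces the square.

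Finally I would treat the dynamics and the currents. The conserved quantity is immediate from Corollary \ref{cor:Noether-Add}(i), (\ref{eq:ConservedLambda}): substituting $\xi$ gives $L=g_{ij}\xi^{i}\dot{x}^{j}=6a\dot{a}\,\alpha-a^{3}\dot{\varphi}\,\beta$, which reduces to $-Ba^{3}\dot{\varphi}$ in case (i) (where $\alpha=0$) and to the current of case (ii) on the explicit $\xi$. For the equations of motion in case (i), I would integrate (\ref{eq:ELeq2-EGC}) to the first integral $a^{3}\dot{\varphi}=\mathrm{const}$, insert it into the zero-energy constraint (\ref{eq:EnergyEGC}) to reduce to a single quadrature for $a$, and then verify by direct substitution that the elementary pair (\ref{eq:Sol1}) solves (\ref{eq:ELeq1-EGC}), (\ref{eq:ELeq2-EGC}) and (\ref{eq:EnergyEGC}) when $V=0$. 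In case (ii) the ``solution'' is only a rewriting: dividing (\ref{eq:ELeq1-EGC}), (\ref{eq:ELeq2-EGC}) and (\ref{eq:EnergyEGC}) by suitable powers of $a$ and using the energy constraint to eliminate $\dot{\varphi}^{2}$ from the first equation produces the normalized system displayed in (ii). The main obstacle I anticipate is making the existence statement in case (i) precise: the reduced quadrature admits solutions for nonzero constant $V$ as well (for instance de Sitter-type exponentials when $\dot{\varphi}\equiv0$), so the claim ``solution for $V=0$ only'' should be read as the existence of the \emph{elementary closed-form} solution (\ref{eq:Sol1}), and I would state the admissible class of solutions explicitly to remove the ambiguity.
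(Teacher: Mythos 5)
Your route is the one the paper itself relies on: Corollary \ref{cor:SolLambda} is stated in the paper without a written proof, and the intended argument is exactly your reduction via Lemma \ref{lem:NoetherSym} to $\partial_{\xi}g=0$ (which, as you verify, is literally the system (\ref{eq:AlfaBeta}) already solved in Theorem \ref{thm:EGC}) together with $\partial_{\xi}\mathscr{U}=0$. Your handling of the latter is correct: $\partial_{\xi}\mathscr{U}=-3a^{2}\alpha V-a^{3}\beta\,dV/d\varphi$, the dependence relation $\beta-B=-(4/a)h'\alpha$ splits it into the two $a$-homogeneous pieces $B\,dV/d\varphi=0$ and $\alpha\left(3V-4h'\,dV/d\varphi\right)=0$, and the quadrature $V'/V=3/(4h')=2\,(E_{+}-CE_{-})'/(E_{+}-CE_{-})$, with $E_{\pm}=\exp(\pm\sqrt{3/8}\,\varphi)$, indeed integrates to $V=k(E_{+}-CE_{-})^{2}$. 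Your closing caveat on case (i) is also warranted and worth keeping: for constant $V=3H^{2}>0$ the de Sitter pair $a=e^{Ht}$, $\varphi=\mathrm{const}$ satisfies (\ref{eq:ELeq1-EGC}), (\ref{eq:ELeq2-EGC}) and (\ref{eq:EnergyEGC}), so the clause ``solution for $V=0$ only'' can only mean the explicit closed-form family (\ref{eq:Sol1}); likewise your exclusion of the degenerate symmetry $A\neq0$, $B\neq0$, $V\equiv0$ matches the evident intent of the dichotomy (i)/(ii).

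The one genuine soft spot is the case (ii) current. You assert that $L=6a\dot{a}\alpha-a^{3}\beta\dot{\varphi}$ ``reduces to the current of case (ii) on the explicit $\xi$'', but it does not: with $B=0$ one has
\begin{equation*}
\beta=-\sqrt{6}\,A\,a^{-3/2}\left(E_{+}-CE_{-}\right)\not\equiv0,
\end{equation*}
so the second term $-a^{3}\beta\dot{\varphi}=\sqrt{6}\,A\,a^{3/2}\left(E_{+}-CE_{-}\right)\dot{\varphi}$ survives and has no reason to vanish along a generic extremal (the solutions of case (ii) have $\dot{\varphi}\neq0$ in general). The full Noether current from Corollary \ref{cor:Noether-Add} is therefore $L=6a\dot{a}\alpha+\sqrt{6}\,A\,a^{3/2}\left(E_{+}-CE_{-}\right)\dot{\varphi}$, and the displayed $L=6a\dot{a}\alpha$ in the statement appears to omit the $\varphi$-contribution. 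You should compute the current in full and flag this discrepancy explicitly rather than claim the stated formula follows by substitution; everything else in your proposal, including the quadrature reduction in case (i) and the observation that the case (ii) ``solution'' is an algebraic rewriting of the equations of motion using the energy constraint, is sound.
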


\begin{corollary}
\label{cor:NBHF}Let $\varepsilon$ (\ref{eq:Source-EGC-Force}) be
a source form on $\mathbb{R}\times T^{2}M$, associated to Klein-Gordon
Lagrangian $\lambda$ and an external force $\phi=(\phi_{a},\phi_{\varphi})$
on $TM$. A~vector field $\xi$ on $M$, expressed by (\ref{eq:VectorField-2dim2}),
is a~generator of invariance transformations of $\varepsilon$ if
and only if $\alpha$, $\beta$ are of the form (\ref{eq:NBH-EGCsym1}),
(\ref{eq:NBH-EGCsym2}), respectively, and $V=V(\varphi)$ and $(\phi_{a},\phi_{\varphi})$
satisfy the following

\emph{Case I.} $\phi_{a}=\phi_{\varphi}=0$:
\begin{eqnarray}
& \frac{3}{2}\alpha V+\beta a\frac{dV}{d\varphi}+2\frac{\partial\alpha}{\partial\varphi}\frac{dV}{d\varphi}=0,\label{eq:NBH-1-EGC1-1}\\
& \frac{3}{2}\alpha\frac{dV}{d\varphi}+\beta a\frac{d^{2}V}{d\varphi^{2}}+3\frac{\partial\alpha}{\partial\varphi}V=0.\label{eq:NBH-2-EGC1-1}
\end{eqnarray}

Non-trivial solutions of (\ref{eq:NBH-1-EGC1-1}), (\ref{eq:NBH-2-EGC1-1}),
read

\emph{(i)} $\alpha=0\,\,(A=0),\,\,B\neq0,\,\,V\,\,linear\,\,in\,\,\varphi$,

\emph{(ii)} $\alpha\neq0\,\,(A\neq0),\,\,B=0,\,\,V=k\left(\exp\left(\sqrt{\frac{3}{8}}\varphi\right)-C\exp\left(-\sqrt{\frac{3}{8}}\varphi\right)\right)^{2},\,\,k>0.$

Equations of motion (\ref{eq:ELeq1-EGC}), (\ref{eq:ELeq2-EGC}) and
the energy condition (\ref{eq:EnergyEGC}) have solutions described
by Proposition \ref{cor:SolLambda}. Moreover, if $V$ is linear,
i.e. $V=P\varphi+Q$, $P\neq0$, then (\ref{eq:ELeq1-EGC}), (\ref{eq:ELeq2-EGC}),
(\ref{eq:EnergyEGC}) possess a solution described by
\begin{equation*}
\begin{split}
 & \frac{\ddot{a}}{a}+2\frac{\dot{a}^{2}}{a^{2}}-P\varphi-Q=0,\quad\ddot{\varphi}+3\frac{\dot{a}}{a}\dot{\varphi}+P=0,\\
 & 3\frac{\dot{a}^{2}}{a^{2}}-\frac{1}{2}\dot{\varphi}^{2}-P\varphi-Q=0.
\end{split}
\end{equation*}

\emph{Case II.} $\phi_{a},\phi_{\varphi}$ are conservative, i.e.
$f_{a}=3a^{2}V(\varphi)$, $f_{\varphi}=a^{3}dV/d\varphi$:
\[
V=V(\varphi)\,\,is\,\,arbitrary.
\]
(\ref{eq:ELeq1-EGC}), (\ref{eq:ELeq2-EGC}), (\ref{eq:EnergyEGC})
possess a solution described by
\begin{equation*}
\begin{split}
 & 2\frac{\ddot{a}}{a}+\frac{\dot{a}^{2}}{a^{2}}+\frac{1}{2}\dot{\varphi}^{2}=0,\quad\ddot{\varphi}+3\frac{\dot{a}}{a}\dot{\varphi}=0,\\
 & V(\varphi)=3\frac{\dot{a}^{2}}{a^{2}}-\frac{1}{2}\dot{\varphi}^{2}.
\end{split}
\end{equation*}

\emph{Case III.} $\phi_{a},\phi_{\varphi}$ depend on $a$, $\varphi$
variables only:
\begin{eqnarray}
& \alpha\left(\frac{9}{2}aV-6a\frac{dh}{d\varphi}\frac{dV}{d\varphi}+\frac{1}{2a}f_{a}-\frac{\partial f_{a}}{\partial a}+4\frac{1}{a}\frac{dh}{d\varphi}\frac{\partial f_{a}}{\partial\varphi}-6\frac{1}{a^{2}}\frac{dh}{d\varphi}f_{\varphi}\right)\label{eq:NBH-1-EGC2}\\
& \quad+B\left(3a^{2}\frac{dV}{d\varphi}-\frac{\partial f_{a}}{\partial\varphi}\right)=0,\nonumber \\
& \alpha\left(3\frac{dh}{d\varphi}a^{2}V+\frac{3}{2}a^{2}\frac{dV}{d\varphi}-4\frac{dh}{d\varphi}a^{2}\frac{d^{2}V}{d\varphi^{2}}-\frac{dh}{d\varphi}f_{a}+\frac{3}{2a}f_{\varphi}-\frac{\partial f_{\varphi}}{\partial a}+4\frac{1}{a}\frac{dh}{d\varphi}\frac{\partial f_{\varphi}}{\partial\varphi}\right)\label{eq:NBH-2-EGC2}\\
& \quad+B\left(a^{3}\frac{d^{2}V}{d\varphi^{2}}-\frac{\partial f_{\varphi}}{\partial\varphi}\right)=0,\nonumber 
\end{eqnarray}
where $B\in\mathbb{R}$, and $h=h(\varphi)$ is function given by
(\ref{eq:h-funkce}).

A~non-trivial solution of (\ref{eq:NBH-1-EGC2}), (\ref{eq:NBH-2-EGC2}),
read
\begin{equation*}
\begin{split}
\alpha & =0\,\,(A=0),\quad\beta=B\in\mathbb{R},\quad f_{a}=3a^{2}V,\quad f_{\varphi}=a^{3}\frac{dV}{d\varphi},
\end{split}
\end{equation*}
where $V=V(\varphi)$ is arbitrary.

\emph{Case IV. }$\phi_{a},\phi_{\varphi}$ are dissipative, i.e. $\phi_{a}=-\phi_{11}\dot{a}-\phi_{12}\dot{\varphi}$,
$\phi_{\varphi}=-\phi_{12}\dot{a}-\phi_{22}\dot{\varphi}$ for some
functions $\phi_{ij}:M\rightarrow\mathbb{R}$:
\begin{equation*}
\begin{split}
 & \alpha\left(\frac{9}{2}aV-6a\frac{dh}{d\varphi}\frac{dV}{d\varphi}\right)+3Ba^{2}\frac{dV}{d\varphi}=0,\\
 & \alpha\left(3a^{2}\frac{dh}{d\varphi}V+\frac{3}{2}a^{2}\frac{dV}{d\varphi}-4a^{2}\frac{dh}{d\varphi}\frac{d^{2}V}{d\varphi^{2}}\right)+Ba^{3}\frac{d^{2}V}{d\varphi^{2}}=0,\\
 & \alpha\left(-\frac{1}{a}\phi_{11}+12\frac{1}{a^{2}}\frac{dh}{d\varphi}\phi_{12}+\frac{\partial\phi_{11}}{\partial a}-4\frac{1}{a}\frac{dh}{d\varphi}\frac{\partial\phi_{11}}{\partial\varphi}\right)+B\frac{\partial\phi_{11}}{\partial\varphi}=0,\\
 & \alpha\left(\frac{dh}{d\varphi}\phi_{11}-2\frac{1}{a}\phi_{12}+6\frac{1}{a^{2}}\frac{dh}{d\varphi}\phi_{22}+\frac{\partial\phi_{12}}{\partial a}-4\frac{1}{a}\frac{dh}{d\varphi}\frac{\partial\phi_{12}}{\partial\varphi}\right)+B\frac{\partial\phi_{12}}{\partial\varphi}=0,\\
 & \alpha\left(\frac{dh}{d\varphi}\phi_{11}-2\frac{1}{a}\phi_{12}+6\frac{1}{a^{2}}\frac{dh}{d\varphi}\phi_{22}+\frac{\partial\phi_{12}}{\partial a}-4\frac{1}{a}\frac{dh}{d\varphi}\frac{\partial\phi_{12}}{\partial\varphi}\right)+B\frac{\partial\phi_{12}}{\partial\varphi}=0,\\
 & \alpha\left(2\frac{dh}{d\varphi}\phi_{12}-3\frac{1}{a}\phi_{22}+\frac{\partial\phi_{22}}{\partial a}-4\frac{1}{a}\frac{dh}{d\varphi}\frac{\partial\phi_{22}}{\partial\varphi}\right)+B\frac{\partial\phi_{22}}{\partial\varphi}=0.
\end{split}
\end{equation*}

Non-trivial solutions of this system read

\emph{(i)} $\alpha=0\,\,(A=0),\,\,B\neq0$, $V$ linear in $\varphi$,
and force coefficients $\phi_{11},\phi_{12},\phi_{22}$ are functions
of variable $a$ only. 

\emph{(ii)} $\alpha\neq0\,\,(A\neq0),\,\,B=0$, $V=k\left(\exp\left(\sqrt{\frac{3}{8}}\varphi\right)-C\exp\left(-\sqrt{\frac{3}{8}}\varphi\right)\right)^{2},\,\,k>0$,
and $\phi_{11},\phi_{12},\phi_{22}$ satisfy

\begin{equation*}
\begin{split}
 & -\frac{1}{a}\phi_{11}+12\frac{1}{a^{2}}\frac{dh}{d\varphi}\phi_{12}+\frac{\partial\phi_{11}}{\partial a}-4\frac{1}{a}\frac{dh}{d\varphi}\frac{\partial\phi_{11}}{\partial\varphi}=0,\\
 & \frac{dh}{d\varphi}\phi_{11}-2\frac{1}{a}\phi_{12}+6\frac{1}{a^{2}}\frac{dh}{d\varphi}\phi_{22}+\frac{\partial\phi_{12}}{\partial a}-4\frac{1}{a}\frac{dh}{d\varphi}\frac{\partial\phi_{12}}{\partial\varphi}=0,\\
 & 2\frac{dh}{d\varphi}\phi_{12}-3\frac{1}{a}\phi_{22}+\frac{\partial\phi_{22}}{\partial a}-4\frac{1}{a}\frac{dh}{d\varphi}\frac{\partial\phi_{22}}{\partial\varphi}=0,
\end{split}
\end{equation*}
where $h=h(\varphi)$ is given by (\ref{eq:h-funkce}); this system
obey for instance $\phi_{11}=a$, $\phi_{12}=0$, $\phi_{22}=-\frac{1}{6}a^{3}$.
\end{corollary}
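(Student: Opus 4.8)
The plan is to specialize the equivalence of Theorem \ref{thm:EGC} to each prescribed force type. That theorem has already done the hard structural work: integrating the Killing equation $\partial_\xi g=0$ to the explicit $\alpha$ (\ref{eq:NBH-EGCsym1}) and $\beta$ (\ref{eq:NBH-EGCsym2}), and reducing the remaining Noether--Bessel-Hagen content to the two scalar equations (\ref{eq:NBH-EGC-1}), (\ref{eq:NBH-EGC-2}) binding $V$, $\phi_a$ and $\phi_\varphi$. Since Theorem \ref{thm:EGC} is an equivalence, substituting a particular shape of $(\phi_a,\phi_\varphi)$ and solving (\ref{eq:NBH-EGC-1}), (\ref{eq:NBH-EGC-2}) delivers necessity and sufficiency at once. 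Throughout I would carry the identities recorded in the proof of Theorem \ref{thm:EGC}, namely $\partial\alpha/\partial\varphi=\alpha\,dh/d\varphi$, $\partial\alpha/\partial a=-\alpha/(2a)$ and the dependence $\beta-B=-(4/a)(dh/d\varphi)\alpha$, so that every term is expressed through $\alpha$, the constant $B$ and $h$ (\ref{eq:h-funkce}).

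First I would treat Case I ($\phi_a=\phi_\varphi=0$): the force terms vanish from (\ref{eq:NBH-EGC-1}), (\ref{eq:NBH-EGC-2}), and dividing by $3a$ and $a^2$ gives exactly (\ref{eq:NBH-1-EGC1-1}), (\ref{eq:NBH-2-EGC1-1}). Because $\alpha$ carries the half-integer power $a^{-1/2}$ while the terms proportional to $B$ carry integer powers, each equation separates into an $A$-part and a $B$-part, and non-trivial symmetries arise by annihilating one of them. The $A=0$ branch leaves only the $B$-terms and constrains $V$ (solution (i)); the $B=0$ branch collapses to the first-order equation $-3gV+4g'(dV/d\varphi)=0$ with $g=\exp(\sqrt{3/8}\,\varphi)+C\exp(-\sqrt{3/8}\,\varphi)$, whose integration (using $g''=\tfrac{3}{8}g$) yields the exponential-squared potential of solution (ii). Case II is then immediate: the conservative force $\phi_a=-\partial\mathscr{U}/\partial a=3a^2V$, $\phi_\varphi=-\partial\mathscr{U}/\partial\varphi=a^3\,dV/d\varphi$ cancels the potential terms in (\ref{eq:NBH-EGC-1}), (\ref{eq:NBH-EGC-2}) identically, which is precisely Corollary \ref{cor:Case} (Case II), leaving $V$ free.

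For Case III the force is velocity-independent, so the $\partial\phi_a/\partial\dot a$, $\partial\phi_a/\partial\dot\varphi$ (and the corresponding $\phi_\varphi$) terms in (\ref{eq:NBH-EGC-1}), (\ref{eq:NBH-EGC-2}) drop out; inserting $\phi_a=\phi_a(a,\varphi)$, $\phi_\varphi=\phi_\varphi(a,\varphi)$ together with the explicit $\alpha,\beta$ produces (\ref{eq:NBH-1-EGC2}), (\ref{eq:NBH-2-EGC2}), each of the form $A(\cdots)+B(\cdots)=0$; annihilating the $B$-bracket recovers the stated solution $\alpha=0$, $f_a=3a^2V$, $f_\varphi=a^3\,dV/d\varphi$ with $V$ arbitrary, confirmed by substitution. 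Case IV is where the real work lies. Writing the dissipative force $\phi_a=-\phi_{11}\dot a-\phi_{12}\dot\varphi$, $\phi_\varphi=-\phi_{12}\dot a-\phi_{22}\dot\varphi$ turns (\ref{eq:NBH-EGC-1}), (\ref{eq:NBH-EGC-2}) into polynomial identities in $(\dot a,\dot\varphi)$: the velocity-free part reproduces the two potential equations listed under Case IV, while equating the coefficients of $\dot a$ and of $\dot\varphi$ separately gives the three first-order equations for $\phi_{11},\phi_{12},\phi_{22}$. Splitting each once more into the $A=0$ and $B=0$ branches yields branch (i) ($\phi_{ij}$ functions of $a$ alone) and branch (ii) ($B=0$, $V$ exponential-squared, $\phi_{ij}$ obeying the reduced system), and one verifies that $\phi_{11}=a$, $\phi_{12}=0$, $\phi_{22}=-\tfrac{1}{6}a^3$ solves the latter.

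The hard part will be Case IV: resolving a single tensorial identity simultaneously into its velocity orders and then into the $a$-power branches, while carrying the three coupled frictional coefficients through first-order partial differential equations in $(a,\varphi)$, is where essentially all the bookkeeping resides. The remaining cases reduce to light specializations of the $\phi=0$ computation together with the exact cancellation built into the conservative force.
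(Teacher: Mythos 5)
Your overall strategy coincides with the paper's: the paper prints no separate proof of Corollary \ref{cor:NBHF}, and---exactly as with Corollary \ref{cor:Case}, whose proof is the one-line ``substitute the form of the external force'' into the master theorem---the intended argument is precisely your specialization of the two conditions (\ref{eq:NBH-EGC-1}), (\ref{eq:NBH-EGC-2}) of Theorem \ref{thm:EGC}, carried with the identities $\partial\alpha/\partial\varphi=\alpha\,dh/d\varphi$, $\partial\alpha/\partial a=-\alpha/(2a)$ and $\beta-B=-(4/a)(dh/d\varphi)\,\alpha$ recorded in that proof. Where your plan is explicit it checks out: the $B=0$ branch of Case I does reduce to $-3gV+4g'\,dV/d\varphi=0$ with $g=\exp(\sqrt{3/8}\,\varphi)+C\exp(-\sqrt{3/8}\,\varphi)$, and the $A$-part of the second equation is automatically compatible (differentiate the first and use $h''+(h')^2=3/8$); the conservative force of Case II cancels the potential terms identically, which is Corollary \ref{cor:Case}, Case II; and in Case IV the split first by velocity degree and then by powers of $a$ reproduces the printed system, with $\phi_{11}=a$, $\phi_{12}=0$, $\phi_{22}=-\tfrac{1}{6}a^{3}$ verifying by direct substitution. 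One refinement: since $\alpha\propto a^{-1/2}$ while the $B$-terms carry integer powers of $a$, \emph{both} brackets must vanish identically; ``annihilating one of them'' is not a choice but the observation that a nontrivial symmetry forces $A=0$ or $B=0$, because simultaneous nonvanishing overdetermines $V$.

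There is, however, one concrete step where your plan, executed faithfully, does not deliver the statement as printed, and you pass over it with ``constrains $V$ (solution (i))''. In Case I with $\alpha=0$, $\beta=B\neq0$, equation (\ref{eq:NBH-1-EGC1-1}) reduces to $BaV'=0$, i.e.\ $V$ \emph{constant}; only the second equation (\ref{eq:NBH-2-EGC1-1}) yields the weaker condition $V''=0$ quoted in branch (i). The same happens in Case IV(i), where the first potential equation gives $3Ba^{2}\,dV/d\varphi=0$. A direct check confirms the obstruction: for $\xi=B\,\partial/\partial\varphi$ and $\phi=0$ one computes
\begin{equation*}
\partial_{J^{2}\xi}\varepsilon=-B\left(3a^{2}V'\,\omega^{a}+a^{3}V''\,\omega^{\varphi}\right)\wedge dt,
\end{equation*}
which vanishes if and only if $V'=0$. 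Hence a strictly linear $V=P\varphi+Q$ with $P\neq0$ is \emph{not} an invariance case of $\varepsilon$, and a complete proof must either restrict branches I(i) and IV(i) to constant $V$ or explain why the $B$-part of the first equation may be discarded (the subsequent display concerning solutions for linear $V$ is about existence of solutions of the equations of motion, not about the symmetry condition). Your proposal neither notices nor resolves this tension, so as written it cannot establish branches I(i)/IV(i) in the form stated; the remaining cases and branch (ii) are proved correctly by your method.
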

In conclusion, the existence of  Noether\textendash Bessel-Hagen symmetry selects the  the form of external force and self-interacting potentialof the scalar field.

\section{Discussion and Conclusions}
Starting from the  Noether\textendash Bessel-Hagen theorem, it is possible to find out symmetries for dynamical systems and then reducing them. The approach is particularly useful, with respect to the Noether standard,  for non-Lagrangian systems involving external forces or  dissipative terms. In fact, the Noether\textendash Bessel-Hagen symmetry can be directly searched for the equations of motion once the \eqref{eq:NBH-intro} condition is satisfied for the dynamical system.


In this work, we provide new analysis of the Noether--Bessel-Hagen equation for mechanical systems on $\mathbb{R}\times T^2M$ with external forces, and discuss necessary and sufficient conditions for these systems (namely with conservative, dissipative, or variational forces) to possess a~Noether--Bessel-Hagen symmetry. Conserved Noether currents for systems with variational external forces are also discussed. After developing the formal structure of the theory, we focused on two main examples: a damped harmonic oscillator and a  scalar-tensor  cosmological model. In the former case, we recover, as conserved quantity the angular momentum, as expected. 

In the latter example, we considered the field equations of a minimally coupled scalar-tensor model with unknown potential $V(\varphi)$ and external force terms. The approach leads to a system of differential equations which, once solved, according to the existence of a symmetry, provides the form of the scalar-field potential. We selected different shapes for $V(\varphi)$ showing that the space of solutions in this case, is different than the one obtained by applying the standard Noether Symmetry Approach to the cosmological Lagrangian. 

We presented a new way to look for symmetry of dynamical systems considering, specifically,   those systems which are not variational.  In this sense, the Noether\textendash Bessel-Hagen Symmetry Approach is a generalization of the Noether Symmetry Approach. In forthcoming papers,  we will apply the method to classes of cosmological models derived from  modified theories of gravity like those discussed in \cite{Cai:2015emx} and \cite{Capozziello:2011et}.

\begin{acknowledgements}ZU acknowledges support by the ESF in Science without borders project, reg. no. CZ.02.2.69/0.0./0.0./16\_027/0008463 within the Operational Programme Research, Development and Education. FB and SC acknowledge the support of Istituto Nazionale di Fisica Nucleare (INFN) (iniziative specifiche MOONLIGHT2, GINGER and QGSKY). This paper is based upon work from COST action CA15117 (CANTATA), supported by COST (European Cooperation in Science and Technology). 
\end{acknowledgements}

%

\section*{Data availability statement}
The data that support the findings of this study are available from the corresponding author upon reasonable request.

%



\end{document}